\documentclass[a4paper,fleqn]{cas-sc}

\usepackage[numbers]{natbib}

\def\tsc#1{\csdef{#1}{\textsc{\lowercase{#1}}\xspace}}
\tsc{WGM}
\tsc{QE}

\usepackage{amsmath,amsfonts,amssymb}
\usepackage{algorithmic}
\usepackage{algorithm}
\usepackage{array}
\usepackage[caption=false,font=normalsize,labelfont=sf,textfont=sf]{subfig}
\usepackage{textcomp}
\usepackage{stfloats}
\usepackage{url}
\usepackage{verbatim}
\usepackage[dvipsnames]{xcolor}
\usepackage{tikz}
\usetikzlibrary{calc, patterns, positioning, circuits.ee.IEC, decorations.pathmorphing}
\usepackage{booktabs}
\usepackage{float}
\usepackage{comment}
\usepackage{dsfont} 
\usepackage[siunitx,RPvoltages]{circuitikz} 

\usepackage{xcolor}

\usepackage{amsthm}

\newtheorem{lemma}{Lemma}
\newtheorem{assumption}{Assumption}

\newtheorem{remark}{Remark}

\newtheorem{theorem}{Theorem}

\newtheoremstyle{plain}
  {} 
  {} 
  {\itshape} 
  {} 
  {\bfseries} 
  {.} 
  {.5em} 
  {} 
\theoremstyle{plain}

\DeclareMathOperator{\col}{col} 

\begin{document}
\let\WriteBookmarks\relax
\def\floatpagepagefraction{1}
\def\textpagefraction{.001}

\shorttitle{Data-Driven Dead-Zone Compensation}    

\shortauthors{M. Mazare and H. Ramezani}  

\title [mode = title]{Data-Driven Dead-Zone Compensation via Projection in Predictive Control Setting}  

\author[1]{Mahmood Mazare}


\ead{mazare@sdu.dk}

\author[1]{Hossein Ramezani}

\cormark[1]

\ead{ramezani@sdu.dk}

\affiliation[1]{organization={Institute of Mechanical and Electrical Engineering, University of Southern Denmark},
            city={Odense},
            country={Denmark}}

\cortext[1]{Corresponding author: Hossein Ramezani}

\begin{abstract}
Actuator dead-zones are a common and troublesome nonlinearity in motion control: a band of commanded effort over which the plant does not respond, leaving a steady-state offset or a limit cycle. This paper proposes a data-driven architecture that compensates such mismatches without a model of the plant and without any parameterization of the dead-zone. The central idea is to identify, alongside the velocity-form predictor used for control, a second \emph{absolute} subspace predictor. Because the absolute predictor carries no integral action, it behaves as a data-driven steady-state sensor, so a persistent actuator mismatch appears as a proportional prediction residual. Embedding this residual as a proxy in a behavioral Hankel matrix reduces the mismatch estimate to a single fixed orthogonal projection evaluated online, with no dynamic estimator, no injected probing signal, and no run-time prediction-error computation. Integrated into a subspace predictive controller, the framework is shown to be recursively feasible and practically input-to-state stable, and it recovers offset-free tracking once the dead-band traversal settles. The approach is validated in real time on a sixth-order, lightly damped Quanser multi-DOF torsion system, whose complex-conjugate poles give a lightly damped open-loop response, achieving offset-free tracking across a $\pm 0.18$\,V actuator dead-band. A second study on a high-precision power amplifier shows that the same architecture rejects dead-time-induced nonlinearities in fast-switching power electronics.
\end{abstract}

\begin{keywords}
Data-driven control \sep Dead-Zone compensation \sep Subspace predictive control \sep Behavioral systems theory \sep Disturbance estimation \sep Offset-free tracking 
\end{keywords}

\maketitle

\section{Introduction}
Actuator dead-zones are a persistent source of performance loss in motion control \cite{Liuetal2021}. A dead-zone is a band of commanded effort over which the plant produces no response \cite{Zhangetal2020}. Whether it originates from static friction in geared transmissions or from voltage thresholding in servo drives, the effect on the closed loop is similar \cite{Shouetal2020}: the system approaches the reference and then stalls, leaving a steady-state offset \cite{Lietal2021}. In lightly damped, flexible mechanisms the problem is more acute, because the residual control effort needed to cross the dead-band can excite unmodeled oscillatory modes \cite{Wangetal2021}.

Classical remedies typically apply an explicit model-based inverse to pre-process the controller output \cite{Zhangetal2023}. Accurate parameterization can restore a near-linear input path, but an exact inverse is discontinuous \cite{Yuetal2024}, and this discontinuity can induce chattering \cite{Caoetal2024} and often limits the loop to ultimately bounded rather than exact tracking \cite{Suetal2018}. To mitigate this, smooth inverse models \cite{Bessaetal2010} and adaptive update laws \cite{Bessaetal2022} have been proposed. A related route avoids explicit inversion by decomposing the dead-zone into a nominal linear gain plus a bounded, disturbance-like term \cite{Bessa2022}. This decomposition combines well with nonlinear stabilizing techniques such as sliding-mode control \cite{Jungetal2022} and dynamic surface control \cite{ChanZhengetal2022}, and with passivity-based designs \cite{Fengetal2025}. These explicit compensators, however, depend on parametric identification \cite{Cequeiraetal2010}, so the loop can degrade under asymmetric wear \cite{Sunaretal2021} or time-varying faults \cite{Wangetal2023}.

When the dead-zone is unknown or coupled with other hard nonlinearities, learning-based approximation methods offer an alternative. Neural networks have been trained to learn the dead-zone map directly from operating data \cite{Jiaetal2019}, with architectures targeting robotic manipulators \cite{Zhouetal2019} and time-varying delays \cite{Maetal2022}. Fuzzy logic systems have been used to approximate inverse models and uncertainty bounds \cite{Jang2019}, including for uncertain time-delay systems \cite{Chiang2013}, and this line of work has developed steadily in recent years \cite{Tangetal2024}. Since exact cancellation would require switching that real actuators cannot achieve, such learners are usually embedded within a stabilizing baseline loop \cite{Cheongetal2013}; combining them with robust tracking gives a capable hybrid architecture \cite{Xiangetal2017}. Other frameworks, such as Active Disturbance Rejection Control (ADRC), avoid explicit modeling by estimating the dead-zone's net effect as a lumped total disturbance \cite{Wangetal2020}, shifting the burden from parameter learning to disturbance estimation \cite{Ahietal2018}. These observer-based approaches are flexible, but they add tuning burden and residual approximation error, and they still assume a nominal plant structure or rely on fast dynamic observers that amplify measurement noise \cite{Ebrahimietal2024}.

When a plant model is unavailable, behavioral data-driven predictive control provides a model-free alternative. Building on Willems' fundamental lemma \cite{willems2005}, data-enabled predictive control (DeePC) replaces identified state-space models with raw input-output trajectories \cite{coulson2019}, and subspace predictive control (SPC) offers a closely related, computationally efficient formulation \cite{favoreel1999}. Recent work has extended these formulations to improve robustness \cite{depersis2020}. Cast in velocity form, these controllers inherit offset-free tracking at no additional computational cost \cite{Lazaroffset}. This convenience carries a trade-off, however: the integral action that provides offset-free tracking also removes the steady-state signature of the dead-zone. The incremental controller can integrate its way around small mismatches, but it does not observe the actuator drop-off it is compensating, and the resulting degraded transients are exactly the behavior one set out to avoid.

This paper addresses this trade-off with a data-driven, projection-based scheme that requires no plant model, no parameterization of the dead-zone, and no learning-based tuning. The key observation is that the lost steady-state information can be recovered by identifying a second \emph{absolute} predictor alongside the incremental controller. Because the absolute predictor has no integral action, it acts as a data-driven DC sensor, and a persistent actuator mismatch appears as a proportional steady-state residual. Embedding this residual as a proxy in a behavioral Hankel matrix reduces the mismatch estimate to a single fixed orthogonal projection, evaluated online as one matrix-vector product without a dynamic estimator, an injected probing signal, or run-time prediction-error computation. Integrated into a subspace predictive controller, the estimate lets the loop shift its baseline effort to cross the dead-band and recover offset-free tracking.

The main contributions of this paper are summarized as follows:
\begin{itemize}
    \item We show that the integral action of velocity-form data-driven predictors removes the steady-state information needed to identify constant actuator mismatches, and we recover this information with a companion \emph{absolute} subspace predictor that acts as a data-driven DC sensor.
    \item We show that the steady-state absolute prediction error maps the physical mismatch through a fixed, data-identified gain, and we embed the resulting proxy in a behavioral Hankel matrix so that the mismatch is extracted online via a single, offline-computed orthogonal projection.
    \item We establish bounded estimation, recursive feasibility, and practical input-to-state stability for the integrated estimator-controller loop, and we show that offset-free tracking is recovered once the dead-band traversal settles.
    \item We validate the architecture in real time on a sixth-order, lightly damped multi-DOF torsion system subject to an actuator dead-zone, and in simulation on an industrial high-precision power amplifier.
\end{itemize}

The remainder of this paper is organized as follows. Section II formulates the dead-zone compensation problem in a model-free setting. Section III identifies the companion absolute and incremental subspace predictors. Section IV derives the absolute prediction error as an exact mismatch proxy. Section V develops the fixed geometric projection that extracts the mismatch online. Section VI establishes recursive feasibility and practical stability. Section VII reports the real-time experimental results, and Section VIII concludes the paper.

\section{Problem Formulation and Control Objective}\label{sec:problem}

Consider a discrete-time, linear time-invariant (LTI) system, whose internal dynamics are represented by the following state-space model:
\begin{subequations}
\begin{align}
    x(k+1) &= A x(k) + B u_{\text{eff}}(k), \label{eq:state} \\
    y(k)   &= C x(k), \label{eq:output}
\end{align}
\end{subequations}
where $x(k) \in \mathbb{R}^n$ is the unmeasurable internal state vector, and $y(k) \in \mathbb{R}^p$ is the measured output vector. The system matrices $A$, $B$, and $C$ are assumed to be strictly unknown, necessitating a purely data-driven framework for both estimation and control.

\subsection{Actuator Mismatch and the Dead-zone Challenge}
The variable $u_{\text{eff}}(k) \in \mathbb{R}^m$ denotes the actual effective input physically acting upon the plant dynamics. In practical systems such as multi-torsion mechanical setups, actuators are not ideal; they are subject to nonlinear physical mismatches, most notably dead-zones and Coulomb friction. Mathematically, a dead-zone implies that the effective input is degraded by an unknown nonlinear loss function:
\begin{equation}
    u_{\text{eff}}(k) = u_{\text{cmd}}(k) - \text{loss}(u_{\text{cmd}}(k)), \label{eq:deadzone_loss}
\end{equation}
where $u_{\text{cmd}}(k) \in \mathbb{R}^m$ is the algorithmically commanded control signal. 

Traditionally, compensating for this behavior requires identifying complex, parameterized nonlinear models to explicitly invert the dead-zone mapping. However, this paper proposes a paradigm shift: our observer does not need to know the shape or bounds of $\text{loss}(\cdot)$. Instead, we lump this nonlinear deficit into a single, time-varying physical mismatch term, defined as $d(k) \triangleq -\text{loss}(u_{\text{cmd}}(k))$. Consequently, the effective input simplifies to:
\begin{equation}
    u_{\text{eff}}(k) = u_{\text{cmd}}(k) + d(k). \label{eq:eff_input}
\end{equation}

\begin{assumption}[Slowly Varying Actuator Mismatch]\label{as:dc}
    The physical mismatch $d(k)$ varies slowly relative to the observer horizon, with a bounded single-step increment $\|d(k)-d(k-1)\| \le \sigma_d$ for some $\sigma_d \ge 0$. The piecewise-constant case $d(k)\approx d$ is recovered as the special case $\sigma_d = 0$.
\end{assumption}

Because the sampling frequency is high and the observer uses a highly truncated historical window $N_o$, the dead-band loss is well approximated as a locally constant DC offset acting on the linear plant over the estimation horizon. The bounded-rate form stated above is the condition actually invoked in the stability analysis of Section~\ref{sec:stability}; the constant approximation $d(k)\approx d$ is used only to build the observer regressor in Section~\ref{sec:projection}.

\subsection{Main Control Objective}
Given the availability of exclusively offline and online input-output measurement data (without any prior knowledge of $A, B$, or $C$, nor the structure of the dead-zone), the primary objective of this proposed architecture is twofold:
\begin{enumerate}
    \item \textbf{Mismatch Extraction:} To design a data-driven Extended State Observer (DDESO) that extracts the unknown physical mismatch $d(k)$ directly from raw data. By projecting the recent input-output data onto an offline-constructed behavioral matrix, the observer captures the dead-zone drop-off without state reconstruction or a parameterized nonlinear model.
    \item \textbf{Offset-Free Subspace Predictive Control:} To integrate the estimated mismatch proxy $\hat{d}(k)$ into a data-driven Subspace Predictive Control (SPC)~\cite{favoreel1999} formulation. The control law uses this proxy to shift its baseline control effort so that the controller crosses the dead-band boundary, giving offset-free tracking of a reference $r(k)$.
\end{enumerate}

\section{Nominal Subspace Predictors}\label{sec:predictors}

The proposed architecture uniquely requires the simultaneous identification of two distinct prediction models from a single, disturbance-free dataset. First, we require an \textit{absolute} predictor to extract the steady-state mapping of the plant; this is vital for the disturbance observer. Second, we require an \textit{incremental} predictor for the control algorithm; this naturally embeds integral action to guarantee offset-free tracking.

\subsection{Offline Data Collection and Hankel Matrices}
To capture the fundamental dynamics of the unknown plant, we conduct an offline identification experiment. The plant is excited using a persistently exciting input sequence, $u_{\text{spc}}(k) \in \mathbb{R}^m$, over a total duration of $T_d$ samples. The corresponding output sequence, $y_{\text{spc}}(k) \in \mathbb{R}^p$, is recorded. We strictly assume this offline dataset is collected in a nominal, disturbance-free environment where $d \equiv 0$.

We define a past historical horizon, $T_{\text{ini}}$, and a future prediction horizon, $N$. The parameter $T_{\text{ini}}$ must be chosen such that it is greater than or equal to the observability index of the underlying system. Let $M = T_d - T_{\text{ini}} - N + 1$ denote the total number of block columns that can be extracted from the dataset. 
For an arbitrary vector sequence $v(k)$, we define the block-Hankel matrix of depth $\ell$ as:
\begin{equation*}
    \mathcal{H}_{\ell}(v) = 
    \begin{bmatrix}
        v(0) & v(1) & \cdots & v(M-1) \\
        v(1) & v(2) & \cdots & v(M) \\
        \vdots & \vdots & \ddots & \vdots \\
        v(\ell-1) & v(\ell) & \cdots & v(\ell+M-2)
    \end{bmatrix}.
\end{equation*}

\subsection{The Absolute Predictor: Capturing Steady-State Mappings}
The absolute predictor models the direct linear relationship between the raw magnitudes of the inputs and outputs. Because this formulation inherently lacks integral action, its steady-state predictions are sensitive to input offsets. We will mathematically exploit this sensitivity in Section~\ref{sec:proxy} to isolate the physical mismatch.

We construct the absolute past input, past output, and future output Hankel matrices directly from the raw experimental data:
\begin{align*}
    U_p^a &= \mathcal{H}_{T_{\text{ini}}}(u_{\text{spc}}) \in \mathbb{R}^{m T_{\text{ini}} \times M}, \\
    Y_p^a &= \mathcal{H}_{T_{\text{ini}}}(y_{\text{spc}}) \in \mathbb{R}^{p T_{\text{ini}} \times M}, \\
    Y_f^a &= \mathcal{H}_{N}(y_{\text{spc}}) \in \mathbb{R}^{p N \times M}.
\end{align*}

According to Willems' Fundamental Lemma~\cite{willems2005}, if the input sequence is persistently exciting of a sufficient order, the column space of these Hankel matrices spans all valid finite trajectories of the LTI system. Therefore, there exist linear mapping matrices $P_{1,a} \in \mathbb{R}^{pN \times m T_{\text{ini}}}$ and $P_{2,a} \in \mathbb{R}^{pN \times p T_{\text{ini}}}$ such that the future absolute outputs can be predicted by the past absolute trajectories:
\begin{equation}
    Y_f^a = P_{1,a} U_p^a + P_{2,a} Y_p^a. \label{eq:abs_pred_matrix}
\end{equation}

To identify these matrices from experimental data subject to measurement noise, we formulate a regularized least-squares optimization problem. We minimize the Frobenius norm of the prediction residual, supplemented by a Tikhonov regularization term $\lambda > 0$ to ensure numerical well-conditioning:
\begin{equation*}
    \min_{P_{1,a}, P_{2,a}} \left\| Y_f^a - \begin{bmatrix} P_{1,a} & P_{2,a} \end{bmatrix} \begin{bmatrix} U_p^a \\ Y_p^a \end{bmatrix} \right\|_F^2 + \lambda \left\| \begin{bmatrix} P_{1,a} & P_{2,a} \end{bmatrix} \right\|_F^2.
\end{equation*}
The closed-form analytical solution is given by:
\begin{equation}
    \begin{bmatrix} P_{1,a} & P_{2,a} \end{bmatrix} = Y_f^a \begin{bmatrix} U_p^a \\ Y_p^a \end{bmatrix}^\top \left( \begin{bmatrix} U_p^a \\ Y_p^a \end{bmatrix} \begin{bmatrix} U_p^a \\ Y_p^a \end{bmatrix}^\top + \lambda I \right)^{-1}. \label{eq:abs_least_squares}
\end{equation}

\subsection{The Incremental Predictor: Embedding Integral Action}
While the absolute predictor is essential for observer design, employing it directly within a predictive controller would result in steady-state tracking errors if any unmodeled dynamics exist. To resolve this, we derive an incremental predictor for the online control algorithm.

We define the backward difference operator for the input sequence as $\Delta u_{\text{spc}}(k) = u_{\text{spc}}(k) - u_{\text{spc}}(k-1)$. Using this differenced data, we construct the incremental past and future input Hankel matrices. Crucially, we maintain the \textit{absolute} coordinates for the output matrices to ensure the controller tracks the true reference magnitude, rather than just output variations:
\begin{align*}
    \Delta U_p &= \mathcal{H}_{T_{\text{ini}}}(\Delta u_{\text{spc}}) \in \mathbb{R}^{m T_{\text{ini}} \times M}, \\
    \Delta U_f &= \mathcal{H}_{N}(\Delta u_{\text{spc}}) \in \mathbb{R}^{m N \times M}.
\end{align*}
   
The future absolute outputs are then modeled as a linear combination of the past input increments, past absolute outputs, and future planned input increments:
\begin{equation}
    Y_f^a = P_1 \Delta U_p + P_2 Y_p^a + \Gamma \Delta U_f, \label{eq:inc_pred_matrix}
\end{equation}
where $P_1 \in \mathbb{R}^{pN \times m T_{\text{ini}}}$, $P_2 \in \mathbb{R}^{pN \times p T_{\text{ini}}}$, and $\Gamma \in \mathbb{R}^{pN \times m N}$. Similar to the absolute case, these predictor matrices are identified via a regularized right-inverse projection onto the combined data space:
\begin{equation}
    \begin{bmatrix} P_1 & P_2 & \Gamma \end{bmatrix} = Y_f^a \begin{bmatrix} \Delta U_p \\ Y_p^a \\ \Delta U_f \end{bmatrix}^\top \left( \begin{bmatrix} \Delta U_p \\ Y_p^a \\ \Delta U_f \end{bmatrix} \begin{bmatrix} \Delta U_p \\ Y_p^a \\ \Delta U_f \end{bmatrix}^\top + \lambda I \right)^{-1}.
\end{equation}

This offline identification yields the final nominal incremental prediction model utilized for online control execution:
\begin{equation}
    \mathbf{y}_f = P_1 \Delta \mathbf{u}_{\text{cmd},p} + P_2 \mathbf{y}_p + \Gamma \Delta \mathbf{u}_{\text{cmd},f}. \label{eq:inc_online_model}
\end{equation}
By formulating the model such that the control variable is the input increment ($\Delta \mathbf{u}_{\text{cmd},f}$), the steady-state control objective inherently requires $\Delta u \to 0$ as $y \to r$. Consequently, this velocity-form structure naturally embeds an integrator into the closed-loop system, facilitating robust offset-free tracking.

\section{Absolute Prediction Error as a Mismatch Proxy}\label{sec:proxy}

A limitation of using an incremental predictor for state or disturbance estimation is its integral action. If a constant physical mismatch, such as a dead-zone boundary, acts on the plant, the incremental variations ($\Delta u, \Delta y$) decay to zero in steady state. The incremental prediction error then vanishes, removing any steady-state information about the magnitude of the mismatch.

To circumvent this, we utilize the absolute predictor identified in Section~\ref{sec:predictors}. Because the absolute predictor maps the raw magnitudes of the inputs and outputs without embedded integrators, any persistent unmodeled mismatch will manifest as a proportional, non-zero steady-state prediction error. This section rigorously derives the algebraic mapping from this geometric residual to a purely data-driven mismatch proxy.
The geometric distinction between these two tracking architectures and their divergent capability to preserve steady-state signature bounds is illustrated in Fig. \ref{fig:predictors}.

\begin{figure}[htbp]
    \centering
    \begin{tikzpicture}[scale=0.95, >=stealth]
        \begin{scope}[shift={(0,5.2)}]
            \draw[->] (-0.5, 0) -- (4.5, 0) node[right] {$u_{ss}$};
            \draw[->] (0, -0.5) -- (0, 3.5) node[above] {$y_{ss}$};
            \node[font=\small] at (2.0, -0.99) {(a) Absolute Mapping};
            
            \draw[thick, blue] (0,0) -- (3.5, 2.8) node[above left, font=\small] {Nominal ($d=0$)};
            
            \draw[thick, red] (1.2,0) -- (4.5, 2.64);
            
            \coordinate (U) at (3.0, 0);
            \coordinate (Ynom) at (3.0, 2.4); 
            \coordinate (Ymis) at (3.0, 1.44); 
            
            \draw[dashed] (U) node[below, font=\small] {$u_{cmd}$} -- (Ynom);
            \draw[dashed] (0, 2.4) node[left, font=\small] {$\hat{y}_{ss}$} -- (Ynom);
            \draw[dashed] (0, 1.44) node[left, font=\small] {$y^*_{ss}$} -- (Ymis);
            
            \fill[blue] (Ynom) circle (2pt);
            \fill[red] (Ymis) circle (2pt);
            
            \draw[<->, thick, purple] (3.2, 1.44) -- (3.2, 2.4) node[midway, right, font=\small] {$e_{ss} = \Phi d$};
            \draw[<->, thick, purple] (0, -0.3) -- (1.2, -0.3) node[midway, below, font=\small] {Mismatch $d$};
        \end{scope}

        \begin{scope}[shift={(0,0)}]
            \draw[->] (-0.5, 0) -- (4.5, 0) node[right] {$\Delta u$};
            \draw[->] (0, -0.5) -- (0, 3.5) node[above] {$\Delta y$};
            \node[font=\small] at (2.0, -0.8) {(b) Incremental Mapping};
            
            \draw[line width=3.5pt, blue!40] (0,0) -- (3.5, 2.8);
            \draw[thick, red, dashed] (0,0) -- (3.5, 2.8);
            \node[above left, font=\small] at (3.9, 2.8) {Nominal \& Mismatched};
            
            \node[align=center, text width=4.0cm, gray, font=\small] at (3.3, 1.1) {Integral action \\ washes out DC shift \\ ($e_{ss} = 0$)};
        \end{scope}
    \end{tikzpicture}
    \caption{Geometric comparison of steady-state mappings. (a) The absolute predictor captures the dead-zone as a non-zero prediction residual $e_{ss}$ proportional to the mismatch $d$. (b) The incremental predictor's variables ($\Delta u, \Delta y$) wash out the constant offset, causing the nominal and mismatched trajectory spaces to overlap and destroying the dead-zone signature.}
    \label{fig:predictors}
\end{figure}

\subsection{Defining the Absolute Prediction Error}
Let $k$ denote an arbitrary sampling instant during a secondary operational phase. We define the immediate historical input and output trajectories over the past horizon $T_{\text{ini}}$ as:
\begin{align*}
    \mathbf{u}_{\text{past}}(k) &= \col\left( u(k-T_{\text{ini}}), \dots, u(k-1) \right) \in \mathbb{R}^{m T_{\text{ini}}}, \\
    \mathbf{y}_{\text{past}}(k) &= \col\left( y(k-T_{\text{ini}}), \dots, y(k-1) \right) \in \mathbb{R}^{p T_{\text{ini}}}.
\end{align*}
Using the offline-identified absolute mapping matrices from \eqref{eq:abs_pred_matrix}, the one-step-ahead nominal prediction of the output at time $k$ is computed as:
\begin{equation*}
    \hat{y}(k) = P_{1,a}^{(1)} \mathbf{u}_{\text{past}}(k) + P_{2,a}^{(1)} \mathbf{y}_{\text{past}}(k) \in \mathbb{R}^p,
\end{equation*}
where $P_{1,a}^{(1)} \in \mathbb{R}^{p \times m T_{\text{ini}}}$ and $P_{2,a}^{(1)} \in \mathbb{R}^{p \times p T_{\text{ini}}}$ represent the first block rows of $P_{1,a}$ and $P_{2,a}$, respectively, corresponding to a single future prediction step. 

The absolute prediction error is subsequently defined as the geometric residual between the true measured output and this nominal, mismatch-ignorant prediction:
\begin{equation}
    e_{\text{spc}}(k) = y(k) - \hat{y}(k) \in \mathbb{R}^p. \label{eq:espc_def}
\end{equation}

\subsection{Steady-State Mapping of the Physical Mismatch}
To prove that $e_{\text{spc}}(k)$ serves as an exact proxy for the unmeasurable physical mismatch $d$, we analyze the system's behavior at steady state. Let the true, unknown steady-state DC gain of the physical LTI plant be denoted by $G_0 \in \mathbb{R}^{p \times m}$. 

First, consider the nominal system operating in a purely linear regime without mismatch ($d=0$). Under a constant steady-state input $u_{ss}$, the plant reaches a steady-state output $y_{ss} = G_0 u_{ss}$. Because the absolute predictor characterizes the nominal plant in the noise-free case, the one-step prediction matches the true output ($\hat{y}_{ss} = y_{ss}$). Substituting these steady-state vectors into the predictor yields:
\begin{equation}
    y_{ss} = P_{1,a}^{(1)} (\mathbf{1}_{T_{\text{ini}}} \otimes I_m) u_{ss} + P_{2,a}^{(1)} (\mathbf{1}_{T_{\text{ini}}} \otimes I_p) y_{ss}.
\end{equation}
Rearranging this identity in terms of the unknown physical DC gain $G_0$ establishes a fundamental behavioral relationship:
\begin{equation}
    \left[ I_p - P_{2,a}^{(1)} (\mathbf{1}_{T_{\text{ini}}} \otimes I_p) \right] G_0 = P_{1,a}^{(1)} (\mathbf{1}_{T_{\text{ini}}} \otimes I_m). \label{eq:nominal_identity}
\end{equation}

Now, suppose the actuator crosses into a nonlinear dead-band, subjecting the plant to an unknown, piecewise-constant physical mismatch $d \neq 0$. As established in \eqref{eq:eff_input}, the new effective input becomes $u_{\text{eff}} = u_{ss} + d$, which drives the true physical plant to a shifted steady state:
\begin{equation}
    y_{ss}^* = G_0 (u_{ss} + d).
\end{equation}
The absolute predictor, remaining ignorant of $d$, generates its prediction based strictly on the measured algorithmic commands and observed outputs:
\begin{equation}
    \hat{y}_{ss}^* = P_{1,a}^{(1)} (\mathbf{1}_{T_{\text{ini}}} \otimes I_m) u_{ss} + P_{2,a}^{(1)} (\mathbf{1}_{T_{\text{ini}}} \otimes I_p) y_{ss}^*.
\end{equation}
The steady-state absolute prediction error, $e_{ss} = y_{ss}^* - \hat{y}_{ss}^*$, can now be expanded. Substituting the predictor equation yields:
\begin{align}
    e_{ss} &= y_{ss}^* - \left[ P_{1,a}^{(1)} (\mathbf{1}_{T_{\text{ini}}} \otimes I_m) u_{ss} + P_{2,a}^{(1)} (\mathbf{1}_{T_{\text{ini}}} \otimes I_p) y_{ss}^* \right] \nonumber \\
           &= \left[ I_p - P_{2,a}^{(1)} (\mathbf{1}_{T_{\text{ini}}} \otimes I_p) \right] y_{ss}^* - P_{1,a}^{(1)} (\mathbf{1}_{T_{\text{ini}}} \otimes I_m) u_{ss}.
\end{align}
Substituting the true physical steady-state $y_{ss}^* = G_0 (u_{ss} + d)$ into the residual:
\begin{equation}
    e_{ss} = \left[ I_p - P_{2,a}^{(1)} (\mathbf{1}_{T_{\text{ini}}} \otimes I_p) \right] G_0 (u_{ss} + d) - P_{1,a}^{(1)} (\mathbf{1}_{T_{\text{ini}}} \otimes I_m) u_{ss}.
\end{equation}
Applying the nominal behavioral identity derived in \eqref{eq:nominal_identity} simplifies this expression:
\begin{align}
    e_{ss} &= \left[ P_{1,a}^{(1)} (\mathbf{1}_{T_{\text{ini}}} \otimes I_m) \right] (u_{ss} + d) - P_{1,a}^{(1)} (\mathbf{1}_{T_{\text{ini}}} \otimes I_m) u_{ss} = \left[ P_{1,a}^{(1)} (\mathbf{1}_{T_{\text{ini}}} \otimes I_m) \right] d. \label{eq:err_proof}
\end{align}

Equation~\eqref{eq:err_proof} is the key result of this section. It shows that the steady-state mapping from the physical mismatch to the absolute prediction error is decoupled from the recursive output feedback and is determined solely by the input-to-output block of the absolute predictor.

We formally define this data-driven DC mapping matrix as $\Phi \in \mathbb{R}^{p \times m}$:
\begin{equation}
    \Phi = P_{1,a}^{(1)} (\mathbf{1}_{T_{\text{ini}}} \otimes I_m). \label{eq:phi_def}
\end{equation}
Assuming the system has at least as many sensors as actuators ($p \ge m$) and $\Phi$ is full column rank, we can invert this mapping using the Moore-Penrose pseudo-inverse, $\Phi^{\dagger} \in \mathbb{R}^{m \times p}$. This inversion provides a purely data-driven, instantaneous proxy for the physical actuator mismatch:
\begin{equation}
    d_{\text{proxy}}(k) = \Phi^{\dagger} e_{\text{spc}}(k) \in \mathbb{R}^m. \label{eq:dproxy}
\end{equation}
When $p < m$, as in the two-input, single-output amplifier of Section~\ref{sec:experiments}, $\Phi$ has a nontrivial null space and $\Phi^{\dagger}$ recovers only the component of the mismatch that is observable from the output, namely the projection of $d$ onto the row space of $\Phi$. This is sufficient whenever the dead-band is excited on a single channel at a time (the operating regime considered here); recovering a fully simultaneous multi-channel mismatch would require additional independent outputs.

\subsection{Embedding the Proxy into the Behavioral Hankel Matrix}
To construct the projection framework for the Data-Driven Extended State Observer (DDESO), we compile a secondary operational dataset, denoted by the subscript $\text{tr}$. Unlike the nominal dataset, this trajectory inherently contains real-world physical mismatches (e.g., natural traversals of the dead-zone boundary) across $T_{\text{tr}}$ samples. Crucially, it requires no artificially injected disturbance signals. 

For each discrete time step $k \in [T_{\text{ini}} + 1, T_{\text{tr}}]$, we compute the prediction error $e_{\text{spc}}(k)$ via \eqref{eq:espc_def} and map it to the mismatch proxy $d_{\text{proxy}}(k)$ via \eqref{eq:dproxy}. 

Let $N_o$ denote the chosen observer horizon, which dictates the dynamic memory of the DDESO. We extract the necessary block-Hankel matrices from this proxy-mapped dataset. Let $M_{\text{eso}} = T_{\text{tr}} - T_{\text{ini}} - N_o + 1$ define the number of available data columns. The respective Hankel matrices for the historical inputs, the disturbance proxy, and the true measured outputs are defined as:
\begin{align*}
    U_p^{\text{eso}} &= \mathcal{H}_{N_o}(u_{\text{tr}}) \in \mathbb{R}^{mN_o \times M_{\text{eso}}}, \\
    D_p &= \mathcal{H}_{N_o}(d_{\text{proxy}}) \in \mathbb{R}^{mN_o \times M_{\text{eso}}}, \\
    Y_p^{\text{eso}} &= \mathcal{H}_{N_o}(y_{\text{tr}}) \in \mathbb{R}^{pN_o \times M_{\text{eso}}}.
\end{align*}

Finally, these matrices are vertically stacked to form the behavioral data matrix $H$, which spans the trajectory space of the mismatched system:
\begin{equation}
    H = \begin{bmatrix} U_p^{\text{eso}} \\ D_p \\ Y_p^{\text{eso}} \end{bmatrix} \in \mathbb{R}^{(2mN_o + pN_o) \times M_{\text{eso}}}. \label{eq:H_matrix}
\end{equation}
By explicitly embedding $D_p$ into the central block of $H$, the underlying physical relationship between historical measurement sequences and the unmeasurable dead-zone mismatch is permanently encoded into the column space of the data matrix, setting the stage for instantaneous real-time algebraic projection.

\section{Online Mismatch Estimation via Geometric Projection}\label{sec:projection}

With the comprehensive behavioral data matrix $H$ constructed offline, the control architecture transitions to the real-time operational phase. The objective is to utilize the permanently embedded geometric relationships within $H$ to estimate the instantaneous physical mismatch, $d(k)$. Crucially, this must be achieved using exclusively the most recent rolling window of measurable inputs and outputs.

\subsection{Mathematical Derivation of the Observer Matrix $L_d$}
At any real-time sampling instant $k$, the controller accesses the immediate historical horizon of $N_o$ actual commands and measured outputs. We define these online measurement vectors as:
\begin{align*}
    u_{\text{ini}}(k) &= \col\bigl( u_{\text{cmd}}(k-N_o+1), \dots, u_{\text{cmd}}(k) \bigr) \in \mathbb{R}^{mN_o}, \\
    y_{\text{ini}}(k) &= \col\bigl( y(k-N_o+1), \dots, y(k) \bigr) \in \mathbb{R}^{pN_o}.
\end{align*}

Under Assumption~\ref{as:dc}, the physical mismatch acting upon the actuator is treated as a local constant, $d(k) \approx d$, over this highly truncated memory horizon $N_o$. Therefore, the theoretical trajectory of the mismatch across this window is given by $\mathbf{1}_{N_o} \otimes d \in \mathbb{R}^{mN_o}$. 

\begin{remark}[Observer Horizon Trade-off]
    The choice of the observer horizon length $N_o$ dictates a fundamental trade-off in the DDESO design. If $N_o$ is chosen too small, the geometric projection becomes susceptible to measurement noise, which can lead to erratic mismatch estimates. Conversely, if $N_o$ is too large, the assumption that the physical mismatch $d(k)$ remains approximately constant (Assumption~\ref{as:dc}) may break down, particularly during fast reference transients where the actuator rapidly traverses the nonlinear dead-band. As an illustrative example, for the high-precision power amplifier operating at $f_s = 400$\,kHz, selecting $N_o = 12$ corresponds to a rolling window of $N_o/f_s = 30\,\mu$s (spanning $N_o-1 = 11$ sampling intervals, i.e.\ $27.5\,\mu$s of elapsed time). This short horizon supports the locally constant approximation while retaining enough data depth to attenuate high-frequency sensor noise.
\end{remark}

If this specific mismatch $d$ is currently acting upon the system, the corresponding $N_o$-step trajectory formed by the inputs, the mismatch, and the outputs must be structurally compatible with the physical plant. According to Willems' Fundamental Lemma, this composite vector $w(d)$ must reside entirely within the column space of the offline Hankel matrix $H$:
\begin{equation*}
    w(d) = \begin{bmatrix} u_{\text{ini}}(k) \\ \mathbf{1}_{N_o} \otimes d \\ y_{\text{ini}}(k) \end{bmatrix} \in \operatorname{col}(H).
\end{equation*}

This structural containment within the behavioral subspace is interpreted geometrically as an orthogonal projection problem in Fig.~\ref{fig:projection}.

\begin{figure}[htbp]
    \centering
    \begin{tikzpicture}[scale=1.1, >=stealth]
        \colorlet{plane}{blue!10}
        \colorlet{planeline}{blue!60}
        
        \coordinate (A) at (-2.2, -1.2);
        \coordinate (B) at (2.2, -1.2);
        \coordinate (C) at (3.5, 1.5);
        \coordinate (D) at (-0.9, 1.5);
        
        \filldraw[fill=plane, draw=planeline, thick] (A) -- (B) -- (C) -- (D) -- cycle;
        \node[planeline, right] at (C) {$\text{col}(H)$};
        
        \coordinate (O) at (0.8, -0.3);
        \fill[black] (O) circle (1.5pt) node[below right] {Origin};
        
        \coordinate (W0) at (-0.8, 2.8);
        \draw[->, thick, red] (O) -- (W0) node[above left] {$w_0$};
        
        \coordinate (W0_proj) at (-0.8, 0.6); 
        \draw[dashed, thick, gray] (W0) -- (W0_proj) node[midway, left] {$P w_0$};
        
        \draw[gray, thin] (-0.8, 0.8) -- (-0.6, 0.8) -- (-0.6, 0.6);
        
        \coordinate (Proj) at (1.6, 0.9);
        
        \draw[->, thick, green!60!black] (W0) -- (Proj) node[midway, above right] {$E\hat{d}(k)$};
        \draw[->, thick, blue] (O) -- (Proj) node[midway, below right] {$w(\hat{d}) = w_0 + E\hat{d}(k)$};
    \end{tikzpicture}
    \caption{Geometric interpretation of the Data-Driven Extended State Observer (DDESO). The measured trajectory $w_0$ lies outside the physical data subspace $\text{col}(H)$. The observer extracts the optimal mismatch proxy $\hat{d}(k)$ such that the composite vector $w(\hat{d})$ is orthogonally projected back onto the valid trajectory space.}
    \label{fig:projection}
\end{figure}

To exploit this geometric property, we define $P$ as the orthogonal projection matrix onto the left null space (the orthogonal complement of the column space) of $H$:
\begin{equation}
    P = I - H H^\dagger \in \mathbb{R}^{(2mN_o + pN_o) \times (2mN_o + pN_o)}. \label{eq:P_proj}
\end{equation}
Because $w(d) \in \operatorname{col}(H)$, the ideal residual of the projection $P w(d)$ must strictly equal zero. To algebraically isolate the unknown variable $d$, we decompose the theoretical composite trajectory into a strictly measurable component, $w_0$, and an unmeasurable mismatch component, parameterized by a selector matrix $E$:
\begin{equation*}
    w(d) = \underbrace{\begin{bmatrix} u_{\text{ini}}(k) \\ \mathbf{0}_{mN_o \times 1} \\ y_{\text{ini}}(k) \end{bmatrix}}_{w_0} 
    + \underbrace{\begin{bmatrix} \mathbf{0}_{mN_o \times m} \\ \mathbf{1}_{N_o} \otimes I_m \\ \mathbf{0}_{pN_o \times m} \end{bmatrix}}_{E} d,
\end{equation*}
where $E \in \mathbb{R}^{(2mN_o + pN_o) \times m}$. The optimal data-driven estimate $\hat{d}(k)$ is formulated as the solution to an unconstrained least-squares optimization problem, which minimizes the squared Euclidean norm of the projection residual:
\begin{equation*}
    \hat{d}(k) = \arg\min_{d} \mathcal{J}(d) = \arg\min_{d} \bigl\| P (w_0 + E d) \bigr\|_2^2.
\end{equation*}

Since $P$ is an orthogonal projector, it is naturally symmetric and idempotent ($P = P^\top = P^2$). Expanding the objective function yields:
\begin{align*}
    \mathcal{J}(d) &= (w_0 + E d)^\top P^\top P (w_0 + E d) \nonumber \\
                   &= (w_0 + E d)^\top P (w_0 + E d) \nonumber \\
                   &= w_0^\top P w_0 + 2 d^\top E^\top P w_0 + d^\top (E^\top P E) d.
\end{align*}
Because the objective function is strictly convex with respect to $d$, the global minimum is located by deriving the gradient with respect to $d$ and equating it to zero:
\begin{equation*}
    \nabla_d \mathcal{J}(d) = 2 E^\top P w_0 + 2 (E^\top P E) d = 0.
\end{equation*}
Solving for $d$ isolates the instantaneous mismatch estimate:
\begin{equation}
    \hat{d}(k) = -\bigl( E^\top P E \bigr)^{-1} E^\top P w_0. \label{eq:d_hat_intermediate}
\end{equation}
We note that $(E^\top P E) \in \mathbb{R}^{m \times m}$ is a highly compact matrix whose inversion is computationally trivial and performed strictly offline. Finally, we map the measurable trajectory $w_0$ directly to the raw, real-time data arrays utilizing a block-diagonal selector matrix $S_{uy} \in \mathbb{R}^{(2mN_o + pN_o) \times (mN_o + pN_o)}$:
\begin{equation*}
    w_0 = S_{uy} \begin{bmatrix} u_{\text{ini}}(k) \\ y_{\text{ini}}(k) \end{bmatrix}, \qquad 
    S_{uy} = \begin{bmatrix} 
        I_{mN_o} & \mathbf{0}_{mN_o \times pN_o} \\ 
        \mathbf{0}_{mN_o \times mN_o} & \mathbf{0}_{mN_o \times pN_o} \\ 
        \mathbf{0}_{pN_o \times mN_o} & I_{pN_o} 
    \end{bmatrix}.
\end{equation*}
Substituting $w_0$ into \eqref{eq:d_hat_intermediate} establishes the final, fixed linear observer mapping:
\begin{equation}
    \hat d(k) = L_d \begin{bmatrix} u_{\text{ini}}(k) \\ y_{\text{ini}}(k) \end{bmatrix}, \qquad
    L_d = -\bigl( E^\top P E \bigr)^{-1} E^\top P S_{uy}. \label{eq:Ld_final}
\end{equation}
The projection matrix $L_d \in \mathbb{R}^{m \times (mN_o + pN_o)}$ is computed entirely offline, bridging the complex trajectory space to a simple runtime gain. 

\subsection{Computational Elegance of Algebraic Projection}
The derivation of $L_d$ highlights the computational advantage of this formulation for real-time implementation.

In traditional observer frameworks, estimating an unmodeled dynamic or disturbance requires an active sequence of operations at every sampling instant. The controller must compute a nominal output prediction using system matrices, calculate the instantaneous tracking error, and filter this error through a dynamic estimator gain via numerical integration. This sequence relies heavily on an accurate parametric model and consumes vital online processing bandwidth.

Conversely, the geometric projection matrix $L_d$ derived in \eqref{eq:Ld_final} fundamentally bypasses this dynamic sequence. Because the absolute predictor's exact DC-mapping was embedded directly into the offline Hankel matrix $H$ (as established in Section~\ref{sec:proxy}), the linear mapping from the historical measurement space to the physical mismatch space is permanently captured within the left null space of $H$. 

Consequently, $L_d$ acts as a direct algebraic extractor. At run time, for example on the hardware-in-the-loop target used here, the algorithm requires no injected test signals, no parameterization of the dead-zone bounds, and no online prediction-error computation. The fixed operation $L_d \begin{bmatrix} u_{\text{ini}}(k) \\ y_{\text{ini}}(k) \end{bmatrix}$ combines prediction, residual generation, and mismatch extraction into a single matrix-vector product.

\section{Recursive Feasibility and Practical Stability Analysis}\label{sec:stability}

In this section, we establish the theoretical guarantees of the proposed integrated architecture. The analysis is divided into two sequential components: demonstrating that the data-driven Subspace Predictive Control (SPC) algorithm remains recursively feasible despite physical mismatches, and proving the practical closed-loop stability (Input-to-State Stability) of the interconnected structure.

\subsection{Definitions and System Setup}
To formalize the analysis, let the augmented historical data vector at time $k$ be denoted as $\chi(k) = \col\bigl( \Delta \mathbf{u}_{\text{cmd},p}(k), \mathbf{y}_p(k) \bigr)$. At each time step, the predictive controller solves the following quadratic program (QP):
\begin{subequations}
\begin{align}
    \mathcal{J}_N^*(\chi(k), &\hat{d}(k)) = \min_{\Delta \mathbf{u}_{\text{cmd},f}} \quad  \bigl\| \mathbf{y}_f - \mathbf{r} \bigr\|_Q^2 + \bigl\| \Delta \mathbf{u}_{\text{cmd},f} \bigr\|_R^2 \label{eq:qp_cost} \\
    \text{s.t.} \quad & \mathbf{y}_f = P_1 \Delta \mathbf{u}_{\text{cmd},p} + P_2 \mathbf{y}_p + \Gamma \Delta \mathbf{u}_{\text{cmd},f} + P_d \hat{d}(k), \label{eq:qp_pred} \\
    & \Delta u_{\min} \le \Delta u_{\text{cmd}}(k+i|k) \le \Delta u_{\max},  \label{eq:qp_rate_con} \\
    & u_{\min} \le u_{\text{cmd}}(k-1) + \sum_{j=0}^{i} \Delta u_{\text{cmd}}(k+j|k) \le u_{\max}, \label{eq:qp_abs_con} \\
    & y(k+N|k) \in \Omega_f. \label{eq:qp_term_con}
\end{align}
\end{subequations}
$\forall i \in [0, N-1]$. Constraint \eqref{eq:qp_term_con} enforces that the terminal predicted output reaches a designated terminal set $\Omega_f$ centered around the target reference $r$. The matrix $P_d \in \mathbb{R}^{pN \times m}$ in~\eqref{eq:qp_pred} propagates the estimated mismatch $\hat{d}(k)$ across the prediction horizon, shifting the predicted trajectory to account for the actuator drop-off; it is obtained directly from the identified predictor blocks.

Instead of relying on heuristic assumptions regarding the observer's accuracy and the terminal set's robustness, we formally derive these properties through the following mathematical lemmas.

\subsection{Theoretical Bounds on the Mismatch Estimation}

\begin{lemma}[Bounded Estimation Error via Orthogonal Projection] \label{lemma:bound_error}
Suppose the physical mismatch $d(k)$ possesses a bounded rate of change, such that $\|d(k) - d(k-1)\| \le \sigma_d$ for some scalar $\sigma_d \ge 0$. Let $\hat{d}(k)$ be the instantaneous mismatch estimate generated by the fixed orthogonal projection $L_d$ derived in \eqref{eq:Ld_final}. Then, the estimation error $\tilde{d}(k) = d(k) - \hat{d}(k)$ is strictly bounded by $\| \tilde{d}(k) \| \le \epsilon_d$, where $\epsilon_d$ scales linearly with $\sigma_d$. Furthermore, if $d(k)$ achieves a constant steady state ($\sigma_d = 0$), then $\tilde{d}(k) \to 0$ once the sliding window is filled with the post-transition value, i.e.\ in at most $N_o$ steps.
\end{lemma}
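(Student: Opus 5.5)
The plan is to exploit the exactness property of the projection, which follows from the construction of $L_d$ in \eqref{eq:Ld_final}, and then treat a slowly varying $d$ as a structured perturbation of an exactly compatible trajectory.

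\textbf{Step 1 (exactness).} First I record that $E^\top P E$ is invertible. This is implicit in \eqref{eq:d_hat_intermediate}, and it amounts to $\operatorname{col}(E)\cap\operatorname{col}(H)=\{0\}$. Now suppose the window $(u_{\text{ini}}(k),\mathbf{1}_{N_o}\otimes d, y_{\text{ini}}(k))$ is a genuine trajectory of the mismatched system with constant $d$. By the fundamental-lemma containment $w(d)\in\operatorname{col}(H)$ stated in Section~\ref{sec:projection}, we have $P(w_0+Ed)=0$. Hence
\[
\hat d(k) = -(E^\top P E)^{-1}E^\top P w_0 = (E^\top P E)^{-1}E^\top P E\, d = d,
\]
so $\tilde d(k)=0$. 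In short, $L_d$ reproduces any mismatch that is consistent with the data-encoded behavior.

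\textbf{Step 2 (perturbation for $\sigma_d>0$).} Let the true window trajectory be $w^\star(k)=w_0+\mathbf{d}_{\text{win}}$, where $\mathbf{d}_{\text{win}}=\col(d(k-N_o+1),\dots,d(k))$ placed in the middle block. I write $\mathbf{d}_{\text{win}}=\mathbf{1}_{N_o}\otimes d(k)+\delta(k)$. Telescoping Assumption~\ref{as:dc} gives $\|d(k-j)-d(k)\|\le j\sigma_d$, so
\[
\|\delta(k)\|\le \sigma_d\Bigl(\textstyle\sum_{j=0}^{N_o-1} j^2\Bigr)^{1/2}\le \sigma_d N_o^{3/2}.
\]
I assume $H$ is rich enough (its column space contains the time-varying-$d$ window trajectories, i.e.\ the training data excited the middle block) that $Pw^\star(k)=0$. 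Then $Pw_0=-PE\,d(k)-P\bar\delta$, where $\bar\delta$ is $\delta$ embedded in the middle block. Substituting into \eqref{eq:d_hat_intermediate} gives
\[
\tilde d(k)= -(E^\top P E)^{-1}E^\top P\bar\delta .
\]
Since $\|P\|\le 1$ and $\|E\|=\sqrt{N_o}$, this yields
\[
\epsilon_d=\|(E^\top PE)^{-1}\|\,\sqrt{N_o}\,N_o^{3/2}\,\sigma_d ,
\]
which is linear in $\sigma_d$ with a fixed, offline-computable constant.

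\textbf{Step 3 (finite convergence).} If $d$ settles to a constant value at time $k_0$, then for $k\ge k_0+N_o-1$ every entry of the window equals $d$. Thus $\delta(k)=0$ and Step 1 gives $\tilde d(k)=0$ exactly, i.e.\ within at most $N_o$ steps.

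\textbf{Main obstacle.} The difficulty is justifying $w^\star\in\operatorname{col}(H)$. The middle block of $H$ contains $d_{\text{proxy}}$ rather than $d$, so I would first invoke \eqref{eq:err_proof} to argue that at steady state $d_{\text{proxy}}=\Phi^\dagger\Phi d=d$ when $\Phi$ has full column rank. The transient discrepancy between proxy and true $d$, together with any mismatch between the noise-free and regularized predictor, would be absorbed into an additional constant residual in $\epsilon_d$ if strict exactness fails. I also expect to need an explicit rank/excitation condition on the training data ensuring $E^\top PE\succ 0$.
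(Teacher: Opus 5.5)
Your proposal is correct and follows essentially the same route as the paper. Like the paper, you split the window mismatch into $\mathbf{1}_{N_o}\otimes d(k)$ plus a variation term, use $P\,w_{\text{true}}(k)=0$ to obtain $\tilde d(k)=-(E^\top PE)^{-1}E^\top P S_d\,\Delta\mathbf{d}_{N_o}(k)$, bound this linearly in $\sigma_d$, and read off $\tilde d(k)=0$ once the window holds a constant value. Your explicit bounds $c_{N_o}\le N_o^{3/2}$ and $\kappa\le\sqrt{N_o}\,\|(E^\top PE)^{-1}\|$ simply make the paper's constants concrete.

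The containment $w_{\text{true}}(k)\in\operatorname{col}(H)$, which you flag as the main obstacle, is exactly what the paper takes for granted by appeal to Willems' Fundamental Lemma. You therefore do not need to fall back on absorbing a proxy-induced constant residual into $\epsilon_d$, and you should not: that would destroy both the linear scaling in $\sigma_d$ and the exact convergence claimed for $\sigma_d=0$.
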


\begin{proof}
Let the true trajectory of the physical mismatch over the observer horizon $N_o$ be defined as $\mathbf{d}_{N_o}(k) = \col\bigl(d(k-N_o+1), \dots, d(k)\bigr) \in \mathbb{R}^{mN_o}$. We decompose this true trajectory into a theoretical piecewise-constant nominal vector and a variation vector:
\begin{equation}
    \mathbf{d}_{N_o}(k) = \bigl( \mathbf{1}_{N_o} \otimes d(k) \bigr) + \Delta \mathbf{d}_{N_o}(k),
\end{equation}
where $\Delta \mathbf{d}_{N_o}(k)$ represents the accumulated variation within the sliding window. Because the single-step difference is bounded by $\sigma_d$, the maximum norm of the variation vector over the horizon $N_o$ is geometrically bounded: $\|\Delta \mathbf{d}_{N_o}(k)\| \le c_{N_o} \sigma_d$, for some constant $c_{N_o} > 0$ strictly dependent on the horizon length.

The true composite physical trajectory acting on the system is therefore:
\begin{equation}
    w_{\text{true}}(k) = \begin{bmatrix} u_{\text{ini}}(k) \\ \mathbf{1}_{N_o} \otimes d(k) + \Delta \mathbf{d}_{N_o}(k) \\ y_{\text{ini}}(k) \end{bmatrix}.
\end{equation}
By Willems' Fundamental Lemma, any valid trajectory of the LTI system must reside in the column space of the offline Hankel matrix $H$. Thus, $w_{\text{true}}(k) \in \operatorname{col}(H)$. Applying the orthogonal projector $P = I - H H^\dagger$ yields $P w_{\text{true}}(k) = 0$. 

Expanding this projection using the selector matrices $w_0$ and $E$ defined in Section~\ref{sec:projection}:
\begin{align*}
    0 &= P \left( w_0 + E d(k) + S_d \Delta \mathbf{d}_{N_o}(k) \right),
\end{align*}
where $S_d \in \mathbb{R}^{(2mN_o + pN_o) \times mN_o}$ maps the variation vector into the composite space. Rearranging this identity isolates the true current mismatch $d(k)$:
\begin{equation*}
    E^\top P E d(k) = - E^\top P w_0 - E^\top P S_d \Delta \mathbf{d}_{N_o}(k).
\end{equation*}
Pre-multiplying by $-(E^\top P E)^{-1}$ and substituting the observer definition $\hat{d}(k) = -(E^\top P E)^{-1} E^\top P w_0$ yields:
\begin{equation*}
    d(k) = \hat{d}(k) - \bigl( E^\top P E \bigr)^{-1} E^\top P S_d \Delta \mathbf{d}_{N_o}(k).
\end{equation*}
The estimation error is exactly the residual mapping:
\begin{equation}
    \tilde{d}(k) = d(k) - \hat{d}(k) = - \bigl( E^\top P E \bigr)^{-1} E^\top P S_d \Delta \mathbf{d}_{N_o}(k). \label{eq:tilde_d_exact}
\end{equation}
Taking the induced matrix norm provides the explicit strict upper bound:
\begin{equation*}
    \| \tilde{d}(k) \| \le \underbrace{\left\| \bigl( E^\top P E \bigr)^{-1} E^\top P S_d \right\|}_{\triangleq \kappa} c_{N_o} \sigma_d = \epsilon_d.
\end{equation*}
The constant $\kappa$ is entirely defined by offline fixed matrices. Thus, the error is bounded by $\epsilon_d$. If the mismatch becomes constant (e.g., the system settles inside the dead-zone), $\sigma_d = 0 \implies \Delta \mathbf{d}_{N_o}(k) = \mathbf{0}$. From \eqref{eq:tilde_d_exact}, this mandates $\tilde{d}(k) = 0$, proving exact asymptotic estimation.
\end{proof}

\subsection{Robust Terminal Set Construction}

\begin{lemma}[Existence of the Robust Positively Invariant Terminal Set] \label{lemma:robust_set}
Suppose there exists a local linear stabilizing feedback law $\Delta u_{\text{cmd}}(k) = K_f (y(k) - r)$ that stabilizes the nominal prediction model \eqref{eq:qp_pred}. Then, for the bounded estimation error space $\mathcal{W} = \{ P_d \tilde{d} \in \mathbb{R}^{pN} \mid \|\tilde{d}\| \le \epsilon_d \}$, there exists a compact, convex Robust Positively Invariant (RPI) terminal set $\Omega_f$ such that if $y \in \Omega_f$, the system remains in $\Omega_f$ under the local control law despite the continuous perturbation $\tilde{d}(k)$.
\end{lemma}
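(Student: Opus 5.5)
The plan is to recast the closed loop under the local law as a linear system driven by a bounded additive disturbance, and then build $\Omega_f$ as the classical minimal robust positively invariant (mRPI) set, or a tractable outer approximation of it. First I would fix a finite-dimensional state for the nominal prediction model \eqref{eq:qp_pred}. Because the incremental predictor is a linear map of $\chi(k)=\col(\Delta \mathbf{u}_{\text{cmd},p}(k),\mathbf{y}_p(k))$, the shifted-window vector $\chi(k)$ together with the offset-free error coordinates $\xi(k)=\chi(k)-\chi_r$ evolves by a linear recursion $\xi(k+1)=\mathcal{A}\xi(k)+\mathcal{B}\Delta u_{\text{cmd}}(k)+\mathcal{G}P_d\hat d(k)$. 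Here $\chi_r$ is the equilibrium with $y=r$ and $\Delta u=0$, and $\mathcal{A},\mathcal{B},\mathcal{G}$ are assembled from the first block rows of $P_1,P_2,\Gamma$ and the shift structure. Writing $\hat d=d-\tilde d$ and absorbing the true offset $d$ into the equilibrium, which the velocity form allows since the integrator removes constant shifts, leaves the perturbation $w(k)=-\mathcal{G}P_d\tilde d(k)$. This perturbation lies in the compact, convex, origin-symmetric set $\mathcal{G}\mathcal{W}$, where $\mathcal{W}$ is the image under $P_d$ of the ball of radius $\epsilon_d$ supplied by Lemma~\ref{lemma:bound_error}.

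Substituting $\Delta u_{\text{cmd}}=K_f(y-r)=K_f\mathcal{C}\xi$ gives $\xi(k+1)=A_K\xi(k)+w(k)$ with $A_K=\mathcal{A}+\mathcal{B}K_f\mathcal{C}$. By hypothesis $A_K$ is Schur, so $\rho(A_K)<1$. I then define
\[
\mathcal{F}_\infty=\bigoplus_{i=0}^{\infty}A_K^{i}\,\mathcal{G}\mathcal{W}.
\]
Convergence in the Hausdorff metric follows from $\|A_K^i\|\le c\lambda^i$ with $\lambda<1$, which gives the uniform bound $\sup_{z\in\mathcal{F}_\infty}\|z\|\le c\,\|\mathcal{G}P_d\|\epsilon_d/(1-\lambda)$. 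The set is compact and convex, since Minkowski sums and linear images of convex compact sets are convex and compact, and closure preserves both properties. It is RPI by the identity $A_K\mathcal{F}_\infty\oplus\mathcal{G}\mathcal{W}=\mathcal{F}_\infty$. For a computable version I would follow Rakovi\'c et al.: choose $s$ and $\alpha\in[0,1)$ with $A_K^s\mathcal{G}\mathcal{W}\subseteq\alpha\,\mathcal{G}\mathcal{W}$ and take $(1-\alpha)^{-1}\bigoplus_{i=0}^{s-1}A_K^i\mathcal{G}\mathcal{W}$. Finally I set $\Omega_f$ to be the projection of $\chi_r\oplus\mathcal{F}_\infty$ onto the terminal-output coordinate. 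If constraint admissibility is also needed, I intersect with the maximal output-admissible set of the input constraints \eqref{eq:qp_rate_con}--\eqref{eq:qp_abs_con}; this requires $\epsilon_d$ small enough that $K_f\mathcal{C}\mathcal{F}_\infty$ fits inside the rate box.

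The main obstacle is that the lemma phrases invariance in terms of the output $y$, while invariance truly lives in the state $\xi$. The projection of an RPI set onto output coordinates is not RPI by itself. I would handle this by defining $\Omega_f$ as the state-space set and interpreting $y\in\Omega_f$ as membership of the window vector $\chi$, which is measurable because $\chi$ consists of past data. I would also assume that $T_{\text{ini}}$ is at least the observability index, so that $\chi$ is a valid nonminimal state by the discussion in Section~\ref{sec:predictors}. A secondary subtlety is that $\tilde d(k)$ is not independent of the state. However, Lemma~\ref{lemma:bound_error} bounds $\tilde d$ uniformly in terms of $\sigma_d$ alone, so treating it as an arbitrary element of $\mathcal{W}$ is conservative and valid.
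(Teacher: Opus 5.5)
Your proposal is correct and uses the same mechanism as the paper. A Schur closed-loop matrix together with a compact convex additive disturbance set gives the mRPI set $\bigoplus_{i\ge 0}\Phi_K^i\mathcal{W}$. This set is well defined, compact and convex by Schur decay, and invariant because $\Phi_K\Omega_\infty\oplus\mathcal{W}=\Omega_\infty$.

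The genuine difference is the state in which you run the argument. The paper writes the recursion directly for the output error, $e_y(k+1)=\Phi_K e_y(k)+P_d\tilde d(k)$, and sets $\Omega_f=\{y \mid y-r\in\Omega_\infty\}$. This treats $e_y\in\mathbb{R}^p$ as if it evolved autonomously under the output feedback $K_f(y-r)$, which does not hold in general once the plant order exceeds $p$. It also adds a perturbation in $\mathbb{R}^{pN}$ to a vector in $\mathbb{R}^p$.

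You instead lift to the past-data window $\chi$, a valid nonminimal state once $T_{\text{ini}}$ is at least the observability index. You route the perturbation through $\mathcal{G}$ so that dimensions match, and you correctly observe that the output projection of an RPI set need not be RPI. The paper's route gives a terminal set stated directly on the single predicted output in \eqref{eq:qp_term_con}, but its invariance step rests on an output recursion that does not exist. Your route gives an invariance argument that actually goes through. The price is reinterpreting \eqref{eq:qp_term_con} as a convex constraint on the predicted terminal window $\chi(k+N\mid k)$; the predictor can express this because $N\ge T_{\text{ini}}$ in both case studies.

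Two smaller points.

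First, your reason for discarding $\mathcal{G}P_d d$ is loose. The physical velocity-form plant is indeed blind to a constant $d$, since $\Delta u_{\text{eff}}=\Delta u_{\text{cmd}}$. The integrating model \eqref{eq:qp_pred} is not, however: a constant injection moves the closed-loop fixed point from $\xi=0$ to $\xi^\star=(I-A_K)^{-1}\mathcal{G}P_d d$. The paper makes the same simplification silently. Existence is unaffected if you center the set at $\xi^\star$, but the set is then not centered at $r$.

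Second, the Rakovi\'c outer approximation presupposes a disturbance set with the origin in its interior. Here $\mathcal{G}\mathcal{W}$ has dimension at most $m$, so $A_K^s\mathcal{G}\mathcal{W}\subseteq\alpha\,\mathcal{G}\mathcal{W}$ generally has no solution until you inflate $\mathcal{W}$ by a small full-dimensional ball.
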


\begin{proof}
Consider the closed-loop autonomous dynamics of the predicted trajectory operating under the local unconstrained stabilizing feedback law $K_f$. Applying this law to the linear subspace predictor \eqref{eq:qp_pred} yields the closed-loop autonomous transition matrix $\Phi_K$. 

Because $K_f$ is a nominal stabilizing controller, $\Phi_K$ is strictly Schur stable (i.e., its spectral radius $\rho(\Phi_K) < 1$). The evolution of the terminal tracking error, subject to the bounded observer error, is mathematically described by:
\begin{equation*}
    e_y(k+1) = \Phi_K e_y(k) + P_d \tilde{d}(k),
\end{equation*}
where $e_y(k) = y(k) - r$ and the additive perturbation strictly satisfies $P_d \tilde{d}(k) \in \mathcal{W}$.

Because $\mathcal{W}$ is a compact and convex set surrounding the origin, and $\Phi_K$ is Schur stable, standard set-theoretic control literature dictates that the minimal Robust Positively Invariant (mRPI) set is defined by the infinite Minkowski sum of the perturbed reachable spaces:
\begin{equation}
    \Omega_{\infty} = \bigoplus_{i=0}^{\infty} \Phi_K^i \mathcal{W}. \label{eq:minkowski_sum}
\end{equation}
Because $\rho(\Phi_K) < 1$, the infinite Minkowski series \eqref{eq:minkowski_sum} converges to a compact, convex, and bounded set. We define our terminal constraint set as $\Omega_f = \{ y \mid y - r \in \Omega_{\infty} \}$. 

The mechanical mechanism behind this set expansion via Minkowski addition to guard against observer inaccuracies is shown conceptually in Fig. \ref{fig:minkowski}.

\begin{figure}[htbp]
    \centering
    \begin{tikzpicture}[scale=1.7, >=stealth]
        \draw[->, gray] (-2.3, 0) -- (2.3, 0) node[right, font=\small] {$e_{y,1}$};
        \draw[->, gray] (0, -2.0) -- (0, 2.0) node[above, font=\small] {$e_{y,2}$};
        
        \fill[black] (0,0) circle (1.5pt) node[below right, font=\small] {Target $r$};
        
        \draw[dashed, thick, red!60!black] (0,0) ellipse (1.8cm and 1.4cm);
        \node[red!60!black, font=\small] at (1.6, 1.1) {$\Omega_{\text{uncomp}}$};
        
        \draw[thick, blue, fill=blue!10] (0,0) ellipse (1.4cm and 1.0cm);
        \node[blue, font=\small] at (0.5, 1.1) {$\Omega_f$};
        
        \draw[dashed, thin, black] (0,0) ellipse (1.0cm and 0.6cm);
        \node[black, font=\small] at (-0.3,-0.3) {$\Omega_{\text{nom}}$};
        
        \coordinate (P_small) at (-0.5, 0.52);
        \draw[blue, fill=blue!20, opacity=0.7] (P_small) circle (0.4cm);
        \fill[black] (P_small) circle (1pt);
        \draw[->, blue, thick] (P_small) -- ++(-0.28, 0.28) node[midway, below left, font=\scriptsize, inner sep=1pt] {$\mathcal{W}(\epsilon_d)$};
        
        \coordinate (P_large) at (0.866, -0.3);
        \draw[red, fill=red!15, opacity=0.5, dashed] (P_large) circle (0.8cm);
        \fill[black] (P_large) circle (1pt);
        \draw[->, red!80!black, thick] (P_large) -- ++(0.69, -0.4) node[midway, above right, font=\scriptsize, inner sep=1pt] {$\mathcal{W}(d_{\max})$};
        
        \draw[->, thick, orange!80!black, dash dot, rounded corners=8pt] (-1.8, 1.8) -- (-1.8, 1.4) -- (-1.5, 0.9) -- (-1.5, 0.3);
        \fill[orange!80!black] (-1.5, 0.3) circle (1.5pt) node[below, font=\scriptsize, text width=1.5cm, align=center] {Steady-State Offset};
        
        \coordinate (Traj_start) at (1.5, 1.5);
        \coordinate (Mid1) at (1.3, 1.4);
        \coordinate (Mid2) at (0.5, 0.5);
        \draw[->, thick, ForestGreen, rounded corners=12pt] (Traj_start) -- (Mid1) -- (Mid2) -- (0.1, 0.15);
        \node[ForestGreen, font=\small, above right] at (Traj_start) {With DDESO};
        
    \end{tikzpicture}
    \caption{Set-theoretic validation of the observer-driven control loop. Without dead-zone estimation, the controller must absorb the maximal physical deadband profile $\mathcal{W}(d_{\max})$, dictating an immensely conservative robust terminal set $\Omega_{\text{uncomp}}$ plagued by steady-state offsets. Via the proposed algebraic projection, the tracking dynamics shrink down to the nominal space expanded only by the minimal residual estimation boundary $\mathcal{W}(\epsilon_d)$, guaranteeing tight practical convergence into $\Omega_f$.}
    \label{fig:minkowski}
\end{figure}

By definition of the Minkowski sum, for any current state $y(k) \in \Omega_f$ and any bounded disturbance $P_d \tilde{d}(k) \in \mathcal{W}$, the subsequent state inherently satisfies:
\begin{equation*}
    y(k+1) - r = \Phi_K (y(k) - r) + P_d \tilde{d}(k) \in \Omega_{\infty}.
\end{equation*}
Thus, $y(k+1) \in \Omega_f$. This proves that a valid, geometrically bounded RPI set $\Omega_f$ exists and inherently absorbs the maximal variation $P_d \epsilon_d$ injected by the estimation error, guaranteeing robust terminal feasibility.
\end{proof}

\subsection{Recursive Feasibility}

\begin{theorem}[Recursive Feasibility] \label{thm:recursive_feasibility}
Suppose the initial optimization problem \eqref{eq:qp_cost} is feasible at time $k=0$ (i.e., a feasible initial solution exists). Under the structural bounds established in Lemma \ref{lemma:bound_error} and the terminal set properties derived in Lemma \ref{lemma:robust_set}, the data-driven predictive controller remains recursively feasible for all $k > 0$, despite the presence of the unmodeled nonlinear physical mismatch.
\end{theorem}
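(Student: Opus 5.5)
The plan is the standard shifted-sequence argument from robust MPC, adapted to the estimated-mismatch predictor \eqref{eq:qp_pred}. Suppose the QP is feasible at time $k$ with optimizer $\Delta\mathbf{u}^*_{\text{cmd},f}(k) = \col\bigl(\Delta u^*(k|k),\dots,\Delta u^*(k+N-1|k)\bigr)$. Let $\mathbf{y}_f^*(k)$ be the associated predicted output, which satisfies $y^*(k+N|k)\in\Omega_f$. At time $k+1$ I will build the candidate
\begin{equation*}
\Delta\tilde{\mathbf{u}}_{\text{cmd},f}(k+1) = \col\bigl(\Delta u^*(k+1|k),\dots,\Delta u^*(k+N-1|k),\, K_f(y^*(k+N|k)-r)\bigr),
\end{equation*}
and show that it satisfies \eqref{eq:qp_rate_con}--\eqref{eq:qp_term_con} for the new data $\chi(k+1)$ and the new estimate $\hat d(k+1)$. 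Recursive feasibility then follows by induction from the assumed feasibility at $k=0$.

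First I handle the input constraints. The first $N-1$ increments are unchanged, so \eqref{eq:qp_rate_con} holds for them. Since $u_{\text{cmd}}(k)=u_{\text{cmd}}(k-1)+\Delta u^*(k|k)$ is exactly what was applied, the cumulative sums in \eqref{eq:qp_abs_con} coincide with those of the previous solution, and \eqref{eq:qp_abs_con} holds for the same indices. For the appended increment $K_f(y^*(k+N|k)-r)$, I will impose the standard requirement that $\Omega_f$ (equivalently $\Omega_\infty$ from Lemma~\ref{lemma:robust_set}) is chosen inside the region where the local law respects the rate and amplitude bounds. Concretely, $K_f\Omega_\infty \subseteq [\Delta u_{\min},\Delta u_{\max}]$, and the accumulated terminal input stays in $[u_{\min},u_{\max}]$. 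This is an admissibility condition on $\Omega_f$ that I will state explicitly, since Lemma~\ref{lemma:robust_set} only gives invariance. It can be met by scaling $\Omega_\infty$ within the admissible set whenever $\epsilon_d$ is small enough.

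The main step, and the expected obstacle, is the terminal constraint \eqref{eq:qp_term_con} under the mismatch between predictions made at $k$ and at $k+1$. The predictor is linear in $(\chi,\Delta\mathbf{u}_f,\hat d)$ and its data window is shifted one step. Hence the new predicted trajectory under the shifted candidate differs from the tail of $\mathbf{y}_f^*(k)$ only through two sources. The first is the discrepancy between the realized $y(k+1)$ and its prediction $y^*(k+1|k)$. The second is the change $P_d(\hat d(k+1)-\hat d(k))$. Writing $\hat d = d - \tilde d$ and using Assumption~\ref{as:dc} together with Lemma~\ref{lemma:bound_error}, both sources are attributed to the perturbation $P_d\tilde d$ with $\|\tilde d\|\le\epsilon_d$, plus an increment of order $\sigma_d$. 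I expect that rigorously reducing the one-step prediction discrepancy to an element of $\mathcal{W}$ will need an additional assumption: exactness of the SPC predictor, i.e.\ noise-free data so that Willems' lemma holds. I would also enlarge $\mathcal{W}$ to absorb the $\sigma_d$ term, which scales linearly with $\sigma_d$ just as $\epsilon_d$ does. Granting this, the new terminal prediction has the form
\begin{equation*}
\tilde y(k+1+N|k+1)-r = \Phi_K\bigl(y^*(k+N|k)-r\bigr) + w,\qquad w\in\mathcal{W},
\end{equation*}
with the perturbation propagated through the Schur-stable closed loop. The RPI property of Lemma~\ref{lemma:robust_set} then gives $\tilde y(k+1+N|k+1)\in\Omega_f$, which is where the Minkowski-sum construction \eqref{eq:minkowski_sum} is used.

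Finally, I would note that the intermediate predicted outputs carry no constraints other than the terminal one in \eqref{eq:qp_term_con}. So the shifted candidate is feasible, the QP at $k+1$ admits an optimizer, and induction on $k$ completes the argument for all $k>0$. The only genuinely delicate point is the bookkeeping in the third step. There, the effect of the realized disturbance must be shown to enter the shifted prediction only through terms bounded by $\epsilon_d$ (and $\sigma_d$), which is what lets the robust terminal set absorb them.
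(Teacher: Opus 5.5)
Your argument has the paper's skeleton: shift the time-$k$ optimizer, append one tail increment, let the RPI set of Lemma~\ref{lemma:robust_set} absorb the estimation error, and induct. Your tail is genuinely different, though, and the difference matters.

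The paper appends $\mathbf{0}_{m\times 1}$. That makes \eqref{eq:qp_rate_con}--\eqref{eq:qp_abs_con} trivial, because the last input is simply held. It then settles the terminal constraint by writing $y(k+N|k+1)=y_{\text{nominal}}(k+N|k)+P_d\tilde d(k)$ and using $y_{\text{nominal}}(k+N|k)\in\Omega_f$. This has two problems:
\begin{itemize}
\item It places the penultimate prediction at time $k+1$ in $\Omega_f$, not $y(k+1+N|k+1)$. The last step of the latter is the held-input step, which the argument never examines.
\item It uses $\Omega_f\oplus\mathcal{W}\subseteq\Omega_f$. Lemma~\ref{lemma:robust_set} does not provide this; it gives $\Phi_K\Omega_\infty\oplus\mathcal{W}\subseteq\Omega_\infty$ under the law $K_f$. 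Moreover, no nonempty compact set satisfies it once $\mathcal{W}\neq\{0\}$.
\end{itemize}
Your tail $K_f\bigl(y^*(k+N|k)-r\bigr)$ produces exactly the form $\Phi_K(\cdot)+w$ to which the lemma applies, so your terminal step is the coherent one. The price is an admissibility condition on the tail that the zero tail never needs.

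That condition needs a correction. Its rate part, $K_f\Omega_\infty\subseteq[\Delta u_{\min},\Delta u_{\max}]$, is a genuine condition on $\Omega_f$. It holds for small $\epsilon_d$ (given $\Delta u_{\min}<0<\Delta u_{\max}$), since $\Omega_\infty$ scales linearly with $\mathcal{W}$. Its amplitude part, however, is not a property of $\Omega_f$ at all. If a component of $u_{\text{cmd}}(k+N-1|k)$ sits on $u_{\max}$, then a tail increment that is positive in that component, however small, violates \eqref{eq:qp_abs_con} at $i=N-1$. Shrinking $\Omega_\infty$ does not help. An output-only $\Omega_f$ cannot encode this requirement. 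It has to be carried by a terminal set that also constrains the held input (the integrator state of the velocity form), and you must prove invariance of that augmented set under the tail law.

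Finally, the step you propose to grant is exactly what the paper asserts, from linearity of the predictor, without derivation. That step consists of an exact predictor, together with a $\mathcal{W}$ enlarged to cover two terms:
\begin{itemize}
\item the propagated one-step output error;
\item $P_d(\hat d(k+1)-\hat d(k))$, whose norm can reach $\sigma_d+2\epsilon_d$.
\end{itemize}
The paper's single perturbation $P_d\tilde d(k)$ omits both. So you are not missing an idea the paper supplies; you are making its implicit hypotheses explicit.
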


\begin{proof}
Assume the quadratic program \eqref{eq:qp_cost} is feasible at an arbitrary time step $k$, yielding the optimal control increment sequence $\Delta \mathbf{u}_f^*(k)$. According to the receding horizon principle, the first element $\Delta u^*(0|k)$ is executed on the physical plant. At the subsequent time step $k+1$, we construct a sub-optimal candidate control sequence by shifting the previous optimal plan and appending a zero-increment terminal action $\Delta \tilde{\mathbf{u}}_f(k+1) = \bigl\{ \Delta u^*(1|k), \Delta u^*(2|k), \dots, \Delta u^*(N-1|k), \mathbf{0}_{m \times 1} \bigr\}.$
To prove recursive feasibility, we must verify that this candidate sequence strictly satisfies all constraints \eqref{eq:qp_pred}--\eqref{eq:qp_term_con} at time $k+1$.

First, we examine the input magnitude and rate constraints. The sequence elements $\Delta u^*(i|k)$ for $i \in [1, N-1]$ natively satisfy the rate bounds \eqref{eq:qp_rate_con} by virtue of their prior optimality at time $k$. The appended terminal element is the zero vector, which trivially satisfies $\Delta u_{\min} \le \mathbf{0} \le \Delta u_{\max}$. Similarly, the absolute input trajectory induced by this sequence mirrors the previously verified optimal trajectory, with the terminal absolute input remaining statically locked at $u_{\text{cmd}}(k+N|k+1) = u_{\text{cmd}}(k+N-1|k)$. Thus, the absolute constraints \eqref{eq:qp_abs_con} are uniformly satisfied.

Second, we evaluate the terminal state constraint \eqref{eq:qp_term_con}. Due to the physical actuator mismatch, the true initial conditions $\chi(k+1)$ injected into the predictor at $k+1$ have evolved under the true mismatch $d(k)$, while the nominal plan anticipated evolution under the proxy $\hat{d}(k)$. Because the prediction model is strictly linear, the realized terminal state diverges from the planned terminal state by an additive perturbation mapping:
\begin{equation*}
    y(k+N|k+1) = y_{\text{nominal}}(k+N|k) + P_d \tilde{d}(k).
\end{equation*}
From Lemma \ref{lemma:bound_error}, the estimation error is strictly bounded ($\|\tilde{d}(k)\| \le \epsilon_d$). Consequently, the perturbation vector resides within the bounded variation space $\mathcal{W}$. Since $y_{\text{nominal}}(k+N|k) \in \Omega_f$ by the feasibility assumption at $k$, and Lemma \ref{lemma:robust_set} proves that $\Omega_f$ is a Robust Positively Invariant (RPI) set specifically constructed via the Minkowski sum to absorb any perturbation originating from $\mathcal{W}$, we definitively conclude:
\begin{equation*}
    y(k+N|k+1) \in \Omega_f.
\end{equation*}
Thus, the candidate sequence $\Delta \tilde{\mathbf{u}}_f(k+1)$ is strictly feasible, guaranteeing the recursive feasibility of the closed-loop algorithm.
\end{proof}

\subsection{Practical Closed-Loop Stability}
Because the proposed framework estimates a highly nonlinear physical phenomenon (the dead-zone) via a localized linear data-driven proxy, the closed-loop system is formally analyzed through the framework of Input-to-State Stability (ISS).

\begin{theorem}[Practical Stability and Asymptotic Offset-Free Tracking] \label{thm:stability}
Consider the closed-loop system comprising the unknown physical LTI plant \eqref{eq:state}, the offline-derived algebraic projection observer \eqref{eq:Ld_final}, and the Subspace Predictive Controller \eqref{eq:qp_cost}. Under Lemmas \ref{lemma:bound_error} and \ref{lemma:robust_set}, the closed-loop state converges to a bounded positively invariant region around the reference $r$. Moreover, as the transient control increments decay and the effective mismatch settles, the system achieves asymptotic offset-free tracking ($\lim_{k\to\infty} \|y(k) - r\| = 0$).
\end{theorem}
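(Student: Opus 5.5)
The plan is a standard MPC Lyapunov argument with the optimal value function $V(k) \triangleq \mathcal{J}_N^*(\chi(k),\hat d(k))$ as an ISS-Lyapunov candidate, and the estimation error $\tilde d(k)$ as the exogenous input.

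\textbf{Sandwich bounds.} First I would establish the two comparison bounds for $V$. A lower bound $V(k) \ge \alpha_1(\|y(k)-r\|)$ follows directly from the stage cost, since $Q \succ 0$ and $R \succ 0$. An upper bound $V(k) \le \alpha_2(\|\chi(k)-\chi_r\|)$ holds locally on $\Omega_f$, using the local feedback $K_f$ of Lemma~\ref{lemma:robust_set} together with the linearity of \eqref{eq:qp_pred}. Theorem~\ref{thm:recursive_feasibility} extends it to the whole feasible region by compactness.

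\textbf{Decrease inequality.} Next I would evaluate the shifted candidate sequence $\Delta\tilde{\mathbf u}_f(k+1)$ from the proof of Theorem~\ref{thm:recursive_feasibility} in the cost at time $k+1$. Three contributions appear. The first stage cost of the time-$k$ plan is dropped, which gives the $-\ell(y(k)-r,\Delta u(k))$ term. The appended zero increment adds a terminal stage term. For this to be absorbed, $\Omega_f$ must be chosen as a sublevel set on which a terminal weight $V_f$ satisfies $V_f(\Phi_K e)-V_f(e)\le -\ell$; I would add this as an implicit strengthening of Lemma~\ref{lemma:robust_set}. The predicted trajectory is perturbed by $P_d\tilde d(k)$, and because the cost is quadratic and bounded on the compact feasible set, this changes the cost by at most a Lipschitz-type term $\gamma(\|\tilde d(k)\|)$. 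Combining these with optimality gives
\[
V(k+1)-V(k)\le -\alpha_3(\|y(k)-r\|)+\gamma(\|\tilde d(k)\|).
\]
Standard ISS-Lyapunov results then give $\|y(k)-r\|\le \beta(\cdot,k)+\gamma'(\epsilon_d)$, using $\|\tilde d\|\le\epsilon_d$ from Lemma~\ref{lemma:bound_error}. This is the practical-stability claim: convergence to a bounded invariant region, namely a sublevel set of $V$ of size set by $\gamma(\epsilon_d)$, contained in $\Omega_f$.

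\textbf{Offset-free limit.} Finally I would show that once the dead-band traversal settles, $\sigma_d=0$ from some finite time on, so Lemma~\ref{lemma:bound_error} gives $\tilde d(k)=0$ after at most $N_o$ further steps. The ISS estimate then reduces to a decrease inequality with zero input, so $V(k)\to 0$ and $\|y(k)-r\|\to0$.

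The velocity form gives the same conclusion from a second angle. At any equilibrium, $\Delta u=0$, and the predictor \eqref{eq:qp_pred} with correct $\hat d=d$ must reproduce the measured output. Minimizing $\|\mathbf y_f-\mathbf r\|_Q^2$ with zero increments then forces $y=r$, provided $r$ is reachable within the input constraints. This equilibrium argument is a useful cross-check.

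\textbf{Main obstacle.} I expect the decrease inequality to be the hardest step. The paper's terminal set is only robust positively invariant; no terminal cost is stated, so the appended-zero step has no guaranteed cost decrease. In addition, the perturbation enters through the initial condition $\chi(k+1)$ and shifts the whole predicted trajectory, not just the terminal output. I would first try to handle both issues by taking $\Omega_f$ as a level set of a quadratic Lyapunov function of $\Phi_K$ that dominates the stage cost, and by bounding the trajectory shift by $\|\Gamma\|$, $\|P_d\|$ times $\|\tilde d\|$. If that fails, I would settle for an ultimate-boundedness result via RPI invariance of $\Omega_f$ plus finite-time entry, with the asymptotic part relying on $\tilde d=0$ eventually.
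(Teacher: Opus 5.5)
Your proposal follows essentially the same route as the paper: the optimal cost $V(k)=\mathcal{J}_N^*(\chi(k),\hat d(k))$ serves as an ISS-Lyapunov candidate, the shifted candidate sequence from Theorem~\ref{thm:recursive_feasibility} gives a descent inequality with a perturbation term $\delta(k)\le L_\delta\|P_d\|\epsilon_d$, and settling of the mismatch together with Lemma~\ref{lemma:bound_error} removes that term and recovers offset-free tracking. The terminal-weight condition you flag as the main obstacle is exactly what the paper's descent inequality \eqref{eq:lyap_descent} relies on without saying so, since the QP \eqref{eq:qp_cost} has no terminal cost and the paper asserts that inequality rather than deriving it; your explicit strengthening is therefore a refinement of the paper's argument, not a departure from it.
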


\begin{proof}
We employ the optimal value function $V(k) = \mathcal{J}_N^*(\chi(k), \hat{d}(k))$ as the candidate ISS Lyapunov function. By standard MPC design, $V(k)$ is bounded by class-$\mathcal{K}$ functions: $\alpha_1(\|\chi(k) - \chi_{\text{ref}}\|) \le V(k) \le \alpha_2(\|\chi(k) - \chi_{\text{ref}}\|)$.

To evaluate the Lyapunov descent condition, we assess the optimal cost at time $k+1$ against the upper bound provided by the cost of the feasible, sub-optimal candidate sequence $\Delta \tilde{\mathbf{u}}_f(k+1)$:
\begin{equation*}
    V(k+1) \le \mathcal{J}_N \bigl(\chi(k+1), \hat{d}(k+1), \Delta \tilde{\mathbf{u}}_f(k+1)\bigr).
\end{equation*}
Expanding this quadratic formulation yields the optimal cost from time $k$, penalized by the realized stage cost, and augmented by a trajectory deviation term $\delta(k)$:
\begin{equation}
    V(k+1) \le V(k) - \bigl\| y(k+1|k) - r \bigr\|_Q^2 - \bigl\| \Delta u^*(0|k) \bigr\|_R^2 + \delta(k). \label{eq:lyap_descent}
\end{equation}
The deviation $\delta(k)$ fundamentally arises from the discrepancy between the observer proxy $\hat{d}(k)$ and the true physical mismatch $d(k)$. Based on the condition number of the QP Hessian ($Q, R$) and the linear geometry of the predictor, this deviation is bounded by a local Lipschitz constant $L_{\delta} > 0$:
\begin{equation*}
    \delta(k) \le L_{\delta} \bigl\| P_d \tilde{d}(k) \bigr\| \le L_{\delta} \| P_d \| \epsilon_d.
\end{equation*}
Substituting this worst-case structural bound into \eqref{eq:lyap_descent} yields the ISS descent condition:
\begin{align}
    V(k+1) - V(k) \le& - \bigl\| y(k+1|k) - r \bigr\|_Q^2 - \bigl\| \Delta u^*(0|k) \bigr\|_R^2 + L_{\delta} \| P_d \| \epsilon_d. \label{eq:lyap_final}
\end{align}
Equation \eqref{eq:lyap_final} shows that the Lyapunov function decreases whenever the tracking-error penalty dominates the bounded estimation-error penalty. The state is therefore driven into a bounded neighborhood of the reference, which establishes practical closed-loop stability.

As the state enters this neighborhood and the tracking error decreases, the controller commands increasingly small increments ($\Delta u^*(0|k) \to 0$), so the commanded input $u_{\text{cmd}}$ settles and stops traversing the dead-band boundaries.

The physical mismatch $d(k)$ then stops fluctuating, so its rate of change tends to zero ($\sigma_d \to 0$). By Lemma~\ref{lemma:bound_error}, a constant mismatch drives the null-space estimation error to zero ($\epsilon_d \to 0$); as $\epsilon_d$ vanishes, the perturbation term in \eqref{eq:lyap_final} vanishes ($\delta(k)\to 0$) and asymptotic stability is recovered. The integral action of the velocity-form predictor then removes any residual steady-state deviation, giving offset-free tracking ($\lim_{k\to\infty} y(k) = r$) without a parameterized inversion of the dead-zone.
\end{proof}

\section{Experimental Validation and Case Studies}\label{sec:experiments}

To evaluate the proposed DDESO integrated with subspace predictive control, we report results on two systems. The aim is to assess the recursive feasibility and practical stability established in Section~\ref{sec:stability}, and to demonstrate extraction and rejection of an actuator dead-zone without any \textit{a priori} parametric knowledge of the plant. The framework is first tested in real-time hardware-in-the-loop experiments on a mechanical Multi-DOF Torsion system whose input dead-band is produced by static friction and voltage thresholding. It is then evaluated in a simulation study on a high-precision power amplifier, where the mismatch is a dead-time-induced dead-band in fast-switching power electronics.

\subsection{Case Study 1}
The experimental platform is the Quanser Multi-DOF Torsion system (Fig.~\ref{fig:hardware_setup}), a testbed that reproduces the flexible couplings found in industrial robotics. The plant consists of a Rotary Servo Base Unit coupled through a flexible coupling to a torsion module carrying two inertial discs, giving three rotational inertias (the servo base and the two discs) interconnected by lightly damped torsional springs.

\begin{figure}[htbp]
    \centering
    \includegraphics[width=0.6\columnwidth]{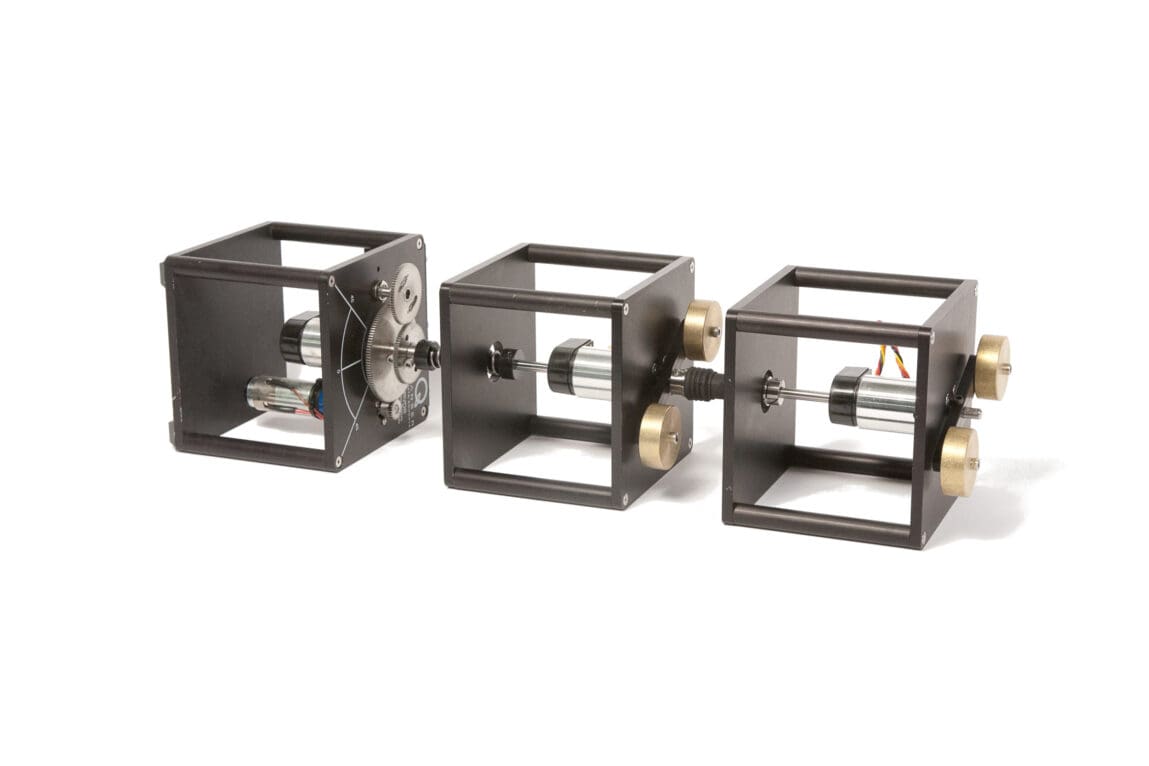}
    \vspace{-2cm}
    \caption{The Quanser Multi-DOF Torsion cyber-physical testbed. The system features a servo base flexibly coupled to multiple inertial loads, presenting a highly oscillatory open-loop response compounded by static Coulomb friction and actuator dead-bands.}
    \label{fig:hardware_setup}
\end{figure}

The modules are connected via a flexible coupling with a nominal torsional stiffness of $1.0\text{ N}\cdot\text{m/rad}$. The rotational dynamics are driven by a direct-current motor and measured by high-resolution optical encoders located on the bearing blocks. Real-time data acquisition and control execution are facilitated by a Quanser Q8-USB device and a VoltPAQ-X1 amplifier, running through the QUARC environment for MATLAB/Simulink.

While the underlying continuous-time dynamics can be nominally approximated by a sixth-order state vector comprising the angular position and velocity of the servo base together with those of the two flexibly coupled torsion discs ($x = [\theta_0, \theta_1, \theta_2, \dot{\theta}_0, \dot{\theta}_1, \dot{\theta}_2]^\top$), the exact system matrices are treated as strictly unknown. The two lightly damped torsional springs render the dominant poles complex-conjugate, producing a lightly damped, oscillatory open-loop response that is difficult to regulate. The actuator also exhibits an unmodeled input dead-zone: static friction in the bearing blocks together with voltage thresholding in the servo drive lump into a measured command dead-band of $\pm 0.18$\,V. For commands $|u_{\text{cmd}}| < 0.18$\,V, the actuator delivers zero effective torque ($u_{\text{eff}} = 0$), which is challenging for a standard linear predictive controller.

Prior to closed-loop execution, an offline identification experiment was conducted. A persistently exciting input sequence $u_{\text{spc}}$ was applied to the nominal system, and the corresponding angular output trajectories $y_{\text{spc}} = \theta_2$ were recorded.

Using this raw dataset, the absolute prediction matrices ($P_{1,a}, P_{2,a}$) and the incremental prediction matrices ($P_1, P_2, \Gamma$) were extracted via regularized least-squares projection as derived in Section~\ref{sec:predictors}. Subsequently, the data-driven DC mapping matrix $\Phi$ was isolated. The secondary proxy-mapped dataset was then compiled, permanently embedding the dead-zone's geometric signature into the central block of the behavioral Hankel matrix $H \in \mathbb{R}^{(2mN_o + pN_o) \times M_{\text{eso}}}$. Finally, the fixed algebraic projection matrix $L_d$ was computed, strictly adhering to the methodology in Section~\ref{sec:projection}.

\subsubsection{Real-Time Control Execution and Results}
During the online phase, the control objective is to track a square-wave reference $r(k)$ in the flexible load's angular position. The controller was tuned with prediction horizon $N = 20$, past-data horizon $T_{\text{ini}} = 15$, observer horizon $N_o = 15$, and weights $Q = 50$, $R = 10^5$; the input and rate constraints were set to the hardware limits of the VoltPAQ-X1 amplifier.

To illustrate the effect of the compensator, the experiment is divided into two phases, shown in Fig.~\ref{fig:exp_results}. In the uncompensated phase ($t \in [0, 20]$\,s, shaded red), the loop runs a standard, mismatch-ignorant subspace predictive controller. The $\pm 0.18$\,V dead-band truncates the control effort near steady state, and because the incremental predictor's integral action removes the steady-state error signal, the controller does not bridge the dead-band, leaving a visible steady-state offset.

\begin{figure}[H]
    \centering
    \includegraphics[width=0.6\columnwidth]{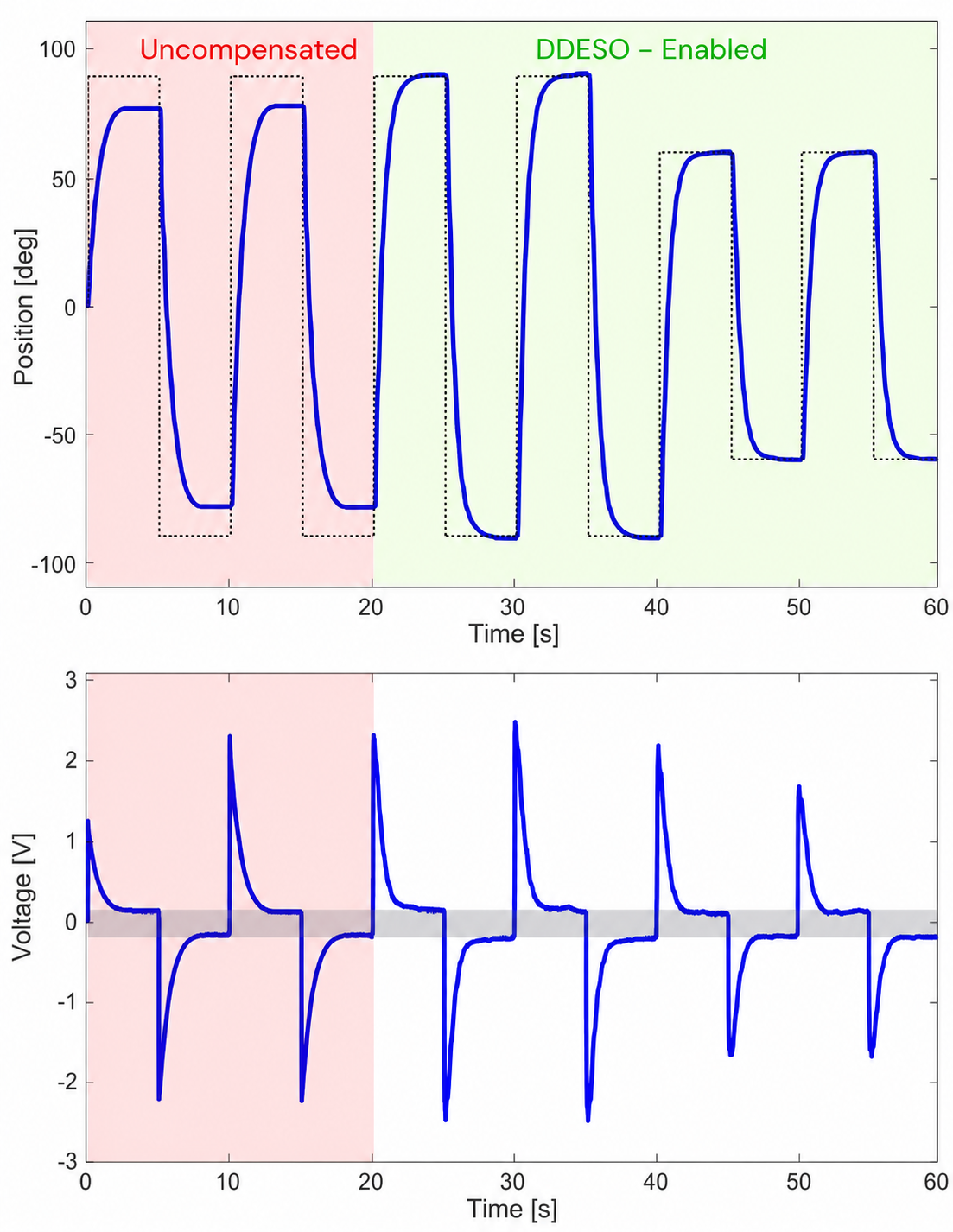}
    \caption{Real-time experimental results on the Quanser Multi-DOF Torsion system. (Top) Tracking performance, showing the steady-state offset during the uncompensated phase ($0$--$20$\,s) and the correction after the proposed DDESO is engaged ($20$--$60$\,s). (Bottom) The control signal $u_{\text{cmd}}$ responding to the estimated dead-zone proxy and crossing the $\pm 0.18$\,V dead-band.}
    \label{fig:exp_results}
\end{figure}

At $t = 20$\,s the proposed DDESO-SPC framework is engaged (shaded green). The fixed projection matrix $L_d$ operates on the current data window of length $N_o$ and detects the drop-off caused by the $\pm 0.18$\,V dead-band. The estimated mismatch is extracted as the proxy $\hat{d}(k)$ and passed to the QP's equality constraint.

Consequently, the predictive controller shifts its baseline control effort. As seen in the control-input trace, the commanded signal $u_{\text{cmd}}$ shifts across the zero-boundary threshold to cross the dead-band. This baseline shift rejects the dead-zone, consistent with Theorem~\ref{thm:stability}, and the integral action then recovers offset-free tracking of the reference without \textit{a priori} parameterization of the nonlinearity.

A supplementary video of the hardware-in-the-loop experiment, including synchronized oscilloscope captures of the states and control inputs, is available online. \footnote{\url{https://www.youtube.com/watch?v=YplHZJuhR2k}}

\subsection{Case Study 2}

To demonstrate the approach on a second system, we consider output-current control of a high-precision power amplifier of the type used in moving stages in the lithography industry. The schematic of this current amplifier is shown in Fig.~\ref{fig:amplifier_schematic}. The system consists of two identical power stages, positive ($p$) and negative ($n$), coupled to a load model.

\begin{figure}[htbp]
    \centering
    \begin{tikzpicture}[scale=1.0, transform shape]
        \draw (0,3) node[circ]{} node[above]{$V_{bus}$}
              to[Tnmos, name=S1] (0,1.5)   
              to[short] (0,1.0);
        \draw (0,1.0) to[Tnmos, name=S2] (0,-0.5) 
              to[short] (0,-1.0);
        \draw (0,-1.0) node[ground]{};
        \node[left] at (-0.75,2.25) {$S_1$};
        \node[left] at (-0.75,0.25) {$S_2$};
        \coordinate (Lmid) at (0,1.0);

        \draw (7,3) node[circ]{} node[above]{$V_{bus}$}
              to[Tnmos, name=S3] (7,1.5)
              to[short] (7,1.0);
        \draw (7,1.0) to[Tnmos, name=S4] (7,-0.5)
              to[short] (7,-1.0);
        \draw (7,-1.0) node[ground]{};
        \node[right] at (7.75,2.25) {$S_3$};
        \node[right] at (7.75,0.25) {$S_4$};
        \coordinate (Rmid) at (7,1.0);

        \draw (Lmid) to[L=$L$, i>^=$i_{Lp}$] (2.2,1.0);
        \draw (7,1.0) to[L=$L$, i<^=$i_{Ln}$] (4.8,1.0);
        \coordinate (Np) at (2.2,1.0);
        \coordinate (Nn) at (4.8,1.0);

        \draw (Np) to[R=$R_m$, i>^=$i_o$] (3.5,1.0)
                    to[L=$L_m$] (Nn);

        \draw (Np) -- (2.2,0.2)
              to[R=$R$] (2.2,-0.6)
              to[C=$C$, v<=$v_{Cp}$] (2.2,-1.4)
              node[ground]{};
        \draw[->,red] (2.55,0.0) -- (2.55,-0.5) node[midway,right,black]{$i_{Cp}$};

        \draw (Nn) -- (4.8,0.2)
              to[R=$R$] (4.8,-0.6)
              to[C=$C$, v<=$v_{Cn}$] (4.8,-1.4)
              node[ground]{};
        \draw[->,red] (5.15,0.0) -- (5.15,-0.5) node[midway,right,black]{$i_{Cn}$};
    \end{tikzpicture}
    \caption{Schematic of the industrial dual-stage current amplifier, adapted from \cite{Lazaroffset}. Each half-bridge leg ($S_1,S_2$ and $S_3,S_4$) drives a series inductor $L$; the output current $i_o$ flows through the load $R_m$--$L_m$, with $R$--$C$ snubber branches at each node.}
    \label{fig:amplifier_schematic}
\end{figure}

The continuous-time dynamics of this power amplifier are established by applying Kirchhoff's circuit laws across the various switching modes. Let the system state vector be defined as $x = [i_{Lp}\ v_{Cp}\ i_{Ln}\ v_{Cn}\ i_o]^\top$, comprising the inductor currents and capacitor voltages of the power stages alongside the output load current. The measured system output is precisely the output current, $y = i_o$. This yields the following continuous-time state-space representation $\dot{x}(t) = A_c x(t) + B_c u(t),
y(t) = C_c x(t), $ where the continuous system matrices are given by:
\begin{align*}
A_c &= \begin{bmatrix} 
-\frac{R}{L} & -\frac{1}{L} & 0 & 0 & \frac{R}{L} \\ 
\frac{1}{C} & 0 & 0 & 0 & -\frac{1}{C} \\ 
0 & 0 & -\frac{R}{L} & -\frac{1}{L} & -\frac{R}{L} \\ 
0 & 0 & \frac{1}{C} & 0 & \frac{1}{C} \\ 
\frac{R}{L_m} & \frac{1}{L_m} & -\frac{R}{L_m} & -\frac{1}{L_m} & -\frac{2R+R_m}{L_m} 
\end{bmatrix}, \quad 
B_c = \begin{bmatrix} 
\frac{V_{bus}}{L} & 0 \\ 
0 & 0 \\ 
0 & \frac{V_{bus}}{L} \\ 
0 & 0 \\ 
0 & 0 
\end{bmatrix}, \quad
C_c = \begin{bmatrix} 0 \\ 0 \\ 0 \\ 0 \\ 1 \end{bmatrix}^\top.
\end{align*}
where $A_c \in \mathbb{R}^{5 \times 5}$, $B_c \in \mathbb{R}^{5 \times 2}$, and $C_c \in \mathbb{R}^{1 \times 5}$. To facilitate digital control implementation, this linear switched model is discretized utilizing a zero-order-hold equivalent at a sampling frequency $f_s$, resulting in the discrete-time model, in which inputs $u(k) \in \mathbb{R}^2$ represent the duty-cycles applied to the positive and negative power stages. The parameter values defining the physical characteristics of the amplifier are detailed in Table \ref{tab:amp_params}.

\begin{table}[htbp]
\centering
\caption{Industrial Power Amplifier Circuit and Control Parameters}
\label{tab:amp_params}
\begin{tabular}{@{}lll@{}}
\toprule
\textbf{Parameter} & \textbf{Symbol} & \textbf{Value} \\ \midrule
Bus voltage & $V_{bus}$ & $360$\,V \\
Power stage inductance & $L$ & $44$\,$\mu$H \\
Power stage capacitance & $C$ & $0.4$\,$\mu$F \\
Power stage parasitic resistance & $R$ & $62.2$\,$\mu\Omega$ \\
Load model inductance & $L_m$ & $20$\,mH \\
Load model resistance & $R_m$ & $10$\,$\Omega$ \\ \midrule
Sampling Frequency & $f_s$ & $400$\,kHz \\
Prediction Horizon & $N$ & $15$ \\
Past Data Horizon & $T_{ini}$ & $12$ \\
Observer Hankel Window & $N_o$ & $12$ \\
Output Cost Weight & $Q$ & $10$ \\
Input Cost Weight & $R_{cost}$ & $0.1 I_{2}$ \\ \bottomrule
\end{tabular}
\end{table}

\subsubsection{The Dead-Zone Challenge in Power Amplifiers}
The standard incremental Subspace Predictive Control (iSPC) methodology \cite{Lazaroffset} approaches offset-free tracking by formulating the prediction model in terms of input increments, $\Delta u(k) := u(k) - u(k-1)$. By identifying prediction matrices directly from Hankel matrices populated with $\Delta u$, the iSPC natively embeds integral action into the control law.

While the velocity-form iSPC provides offset-free tracking under ideal linear conditions, it is sensitive to the nonlinearities of switching power electronics. In high-precision amplifiers, hardware safety requires a ``dead-time'' (blanking delay) between the gating signals of complementary MOSFETs to prevent shoot-through. This dead-time, together with semiconductor forward voltage drops, produces an actuator dead-zone: within the dead-band the amplifier does not deliver the expected proportional voltage, so the effective input gain drops toward zero.

Because the standard iSPC relies on reactive integration, it does not directly sense the DC nature of this dead-zone. The embedded integrator must accumulate tracking error over many sampling intervals before it commands a $\Delta u$ large enough to cross the dead-band, and this accumulation is further limited by any slew-rate bounds on $\Delta u$. The result is slow transient recovery and residual steady-state offset when operating near the dead-zone boundary.

\begin{figure}[htbp]
    \centering
    \includegraphics[width=0.7\columnwidth]{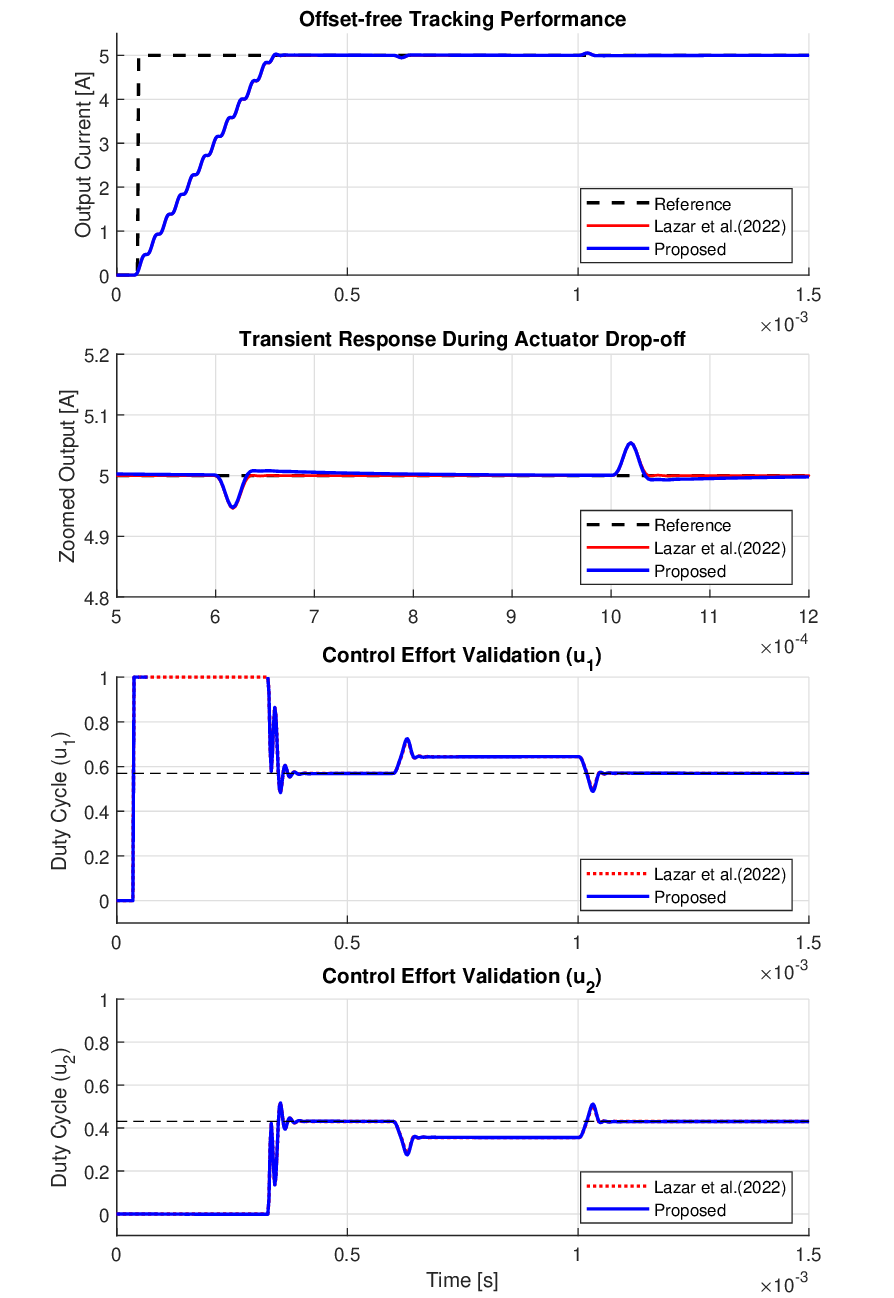}
    \caption{Closed-loop performance of the baseline iSPC \cite{Lazaroffset} and the proposed DDESO-MPC on the high-precision power amplifier (simulation). The system tracks a $5$\,A reference under an unmodeled dead-zone applied between $t = 6 \times 10^{-4}$\,s and $10 \times 10^{-4}$\,s. The DDESO reduces the transient offset seen in the baseline controller by shifting the baseline control effort.}
    \label{fig:amplifier_results}
\end{figure}

\subsubsection{Comparative Results and Technical Discussion}
To evaluate the proposed Data-Driven ESO integrated with MPC, we compare against the baseline iSPC algorithm of Lazar \textit{et al.} \cite{Lazaroffset} under identical tuning and a measurement-noise level of $19$\,dB SNR. Figure~\ref{fig:amplifier_results} shows the tracking performance. To stress the compensator, a dead-zone drop-off is applied to the first input channel ($u_1$) between $t = 6 \times 10^{-4}$\,s and $10 \times 10^{-4}$\,s.

As shown in the zoomed transient (second subplot of Fig.~\ref{fig:amplifier_results}), the baseline iSPC \cite{Lazaroffset} (red dotted trace) exhibits a steady-state offset and an oscillatory recovery, because the velocity-form integration crosses the dead-zone threshold only after accumulating error.

The proposed DDESO-MPC (blue solid trace) reduces this offset. The companion absolute predictor isolates the dead-zone drop-off as a steady-state residual; the orthogonal projection onto the offline behavioral Hankel matrix maps this residual to a proxy of the mismatch, which the controller uses to apply a feedforward shift to its baseline effort rather than waiting for integral accumulation. This is visible in the control-effort subplots, where the duty-cycle commands for the DDESO move away from the baseline iSPC trajectories at the onset of the dead-zone. By shifting across the dead-band, the DDESO recovers offset-free tracking under the tested noise level.

\section{Conclusion}

This paper presented a data-driven architecture for compensating nonlinear actuator mismatches, and dead-zones in particular, in a subspace predictive control setting. We showed that incremental velocity-form predictors remove the steady-state information required for mismatch estimation, and we recovered that information through a companion absolute predictor whose steady-state prediction error serves as a proxy for the physical mismatch. Embedding this DC mapping in a behavioral Hankel matrix yielded a fixed algebraic orthogonal projection, so the controller extracts the mismatch from a short rolling window of raw input-output data with a single matrix-vector product, without a dynamic state estimator.

The analysis established the structural bounds of the framework. Using Minkowski set addition and an Input-to-State Stability (ISS) argument, we showed that the integrated observer-controller loop is recursively feasible and practically stable, and that once the control increments settle the projection recovers the mismatch, giving asymptotic offset-free tracking without prior parameterization of the dead-band or of the underlying linear plant.

These properties were validated in real-time hardware-in-the-loop experiments on a multi-DOF torsion system, where the proposed architecture shifted the subspace predictive controller's baseline effort to cross a $\pm 0.18$\,V physical dead-band and restored offset-free tracking, and in a simulation study on a high-precision power amplifier. Future work will extend the projection framework to rapidly time-varying mismatches and to distributed, multi-agent subspace control, and will study the effect of the observer horizon and noise level on estimation accuracy in more detail.

\bibliographystyle{elsarticle-num}
\bibliography{cas-refs}

@ARTICLE{Liuetal2021,
  author  = {Liu, Y. and Zhang, Y. and Liu, L. and Li, H. and Liu, H.},
  title   = {Adaptive dynamic surface control for a class of dead-zone nonlinear systems via output feedback},
  journal = {IEEE Access},
  year    = {2021}
}

@ARTICLE{Zhangetal2020,
  author  = {Zhang, Y. and Yan, Q. and Cai, J. and Wu, X.},
  title   = {Adaptive iterative learning control for tank gun servo systems with input deadzone},
  journal = {IEEE Access},
  year    = {2020}
}

@ARTICLE{Shouetal2020,
  author  = {Shou, Q. and Xia, H. and Yan, Q. and Cai, J. and Ma, Y. and Bao, Z.},
  title   = {Robust learning control for a class of uncertain linear motor systems},
  journal = {J. Phys.: Conf. Ser.},
  year    = {2020}
}

@ARTICLE{Lietal2021,
  author  = {Li, L. and Lin, Z. and Jiang, Y. and Yu, C. and Yao, J.},
  title   = {Valve deadzone/backlash compensation for lifting motion control of hydraulic manipulators},
  journal = {Machines},
  year    = {2021}
}

@ARTICLE{Wangetal2021,
  author  = {Wang, X. and Ma, Y. and Yan, Q. and Zhang, Y. and Guan, X.},
  title   = {Robust ILC of nonlinearly parametric time-delay systems with input deadzone},
  journal = {J. Phys.: Conf. Ser.},
  year    = {2021}
}

@ARTICLE{Zhangetal2023,
  author  = {Zhang, H. and Yan, Q. and Cai, J. and Gao, S. and Zhang, Y.},
  title   = {Initial-rectification neuro-adaptive iterative learning control for robot manipulators with input deadzone and nonzero initial errors},
  journal = {IEEE Access},
  year    = {2023}
}

@ARTICLE{Yuetal2024,
  author  = {Yu, Y. and Wang, L.},
  title   = {Adaptive fault-tolerant finite-time flight-path angle control for aircraft systems with unknown deadzone and actuator faults},
  journal = {IEEE Access},
  year    = {2024}
}

@ARTICLE{Caoetal2024,
  author  = {Cao, L. and Liu, S. and Xu, L.},
  title   = {An intelligent fault-tolerant control method for a flexible-link manipulator with an uncertain dead-zone and intermittent actuator faults},
  journal = {Mathematics},
  year    = {2024}
}

@MISC{Suetal2018,
  author  = {Su, X. and Liu, Z.},
  title   = {Direct adaptive compensation control of mechanical systems with unknown actuator failures and dead-zone nonlinearities},
  year    = {2018}
}

@MISC{Bessaetal2010,
  author  = {Bessa, W. M. and Dutra, M. and Kreuzer, E.},
  title   = {An adaptive fuzzy dead-zone compensation scheme and its application to electro-hydraulic systems},
  year    = {2010}
}

@UNPUBLISHED{Bessaetal2022,
  author  = {Bessa, W. M. and Dutra, M. and Kreuzer, E.},
  title   = {An adaptive fuzzy dead-zone compensation scheme for nonlinear systems},
  note    = {arXiv preprint},
  year    = {2022}
}

@UNPUBLISHED{Bessa2022,
  author  = {Bessa, W. M.},
  title   = {An adaptive fuzzy sliding mode controller for nonlinear systems with non-symmetric dead-zone and its application to an electro-hydraulic system},
  note    = {arXiv preprint},
  year    = {2022}
}

@ARTICLE{Jungetal2022,
  author  = {Jung, D. and Jeon, J. W.},
  title   = {Synchronous control of 2-D.O.F master-slave manipulators using actuators with asymmetric nonlinear dead-zone characteristics},
  journal = {IEEE Access},
  year    = {2022}
}

@ARTICLE{ChanZhengetal2022,
  author  = {Chan-Zheng, C. and Borja, P. and Scherpen, J.},
  title   = {Dead-zone compensation via passivity-based control for a class of mechanical systems},
  journal = {IFAC-PapersOnLine},
  year    = {2022}
}

@ARTICLE{Fengetal2025,
  author  = {Feng, X. and Wang, C.},
  title   = {Adaptive tracking control for a class of uncertain MIMO nonlinear systems with input constraints},
  journal = {J. Intell. Robot. Syst.},
  year    = {2025}
}

@INPROCEEDINGS{Cequeiraetal2010,
  author    = {de Cequeira, A. C. T. and Vale, M. and Fonseca, D. G. V. and Araújo, F. M. U. and Maitelli, A.},
  title     = {Estimation and compensation of dead-zone inherent to the actuators of industrial processes},
  booktitle = {Int. Conf. Informatics in Control, Automation and Robotics},
  year      = {2010}
}

@ARTICLE{Sunaretal2021,
  author  = {Sunar, N. and others},
  title   = {Improved pole-placement control with feed-forward dead zone compensation for position tracking of electro-pneumatic actuator system},
  journal = {Elektrika},
  year    = {2021}
}

@ARTICLE{Wangetal2023,
  author  = {Wang, T. and others},
  title   = {Active fault-tolerant control for the dual-valve hydraulic system with unknown dead-zone},
  journal = {ISA Trans.},
  year    = {2023}
}

@ARTICLE{Jiaetal2019,
  author  = {Jia, L. and others},
  title   = {A robust adaptive trajectory tracking algorithm using SMC and machine learning for FFSGRs with actuator dead zones},
  journal = {Appl. Sci.},
  year    = {2019}
}

@ARTICLE{Zhouetal2019,
  author  = {Zhou, Q. and Zhao, S. and Li, H. and Lu, R. and Wu, C.},
  title   = {Adaptive neural network tracking control for robotic manipulators with dead zone},
  journal = {IEEE Trans. Neural Netw. Learn. Syst.},
  year    = {2019}
}

@ARTICLE{Maetal2022,
  author  = {Ma, L. and Wang, M.},
  title   = {Adaptive compensation tracking control for time-varying delay nonlinear systems with unknown actuator dead zone},
  journal = {Machines},
  year    = {2022}
}

@ARTICLE{Jang2019,
  author  = {Jang, J.},
  title   = {Fuzzy logic deadzone compensation with feedback linearization of nonlinear systems},
  journal = {Appl. Math.},
  year    = {2019}
}

@MISC{Chiang2013,
  author  = {Chiang, C.-C.},
  title   = {Adaptive fuzzy tracking control for uncertain nonlinear time-delay systems with unknown dead-zone input},
  year    = {2013}
}

@ARTICLE{Tangetal2024,
  author  = {Tang, H. H. and Ahmad, N. S.},
  title   = {Fuzzy logic approach for controlling uncertain and nonlinear systems: a comprehensive review of applications and advances},
  journal = {Syst. Sci. \& Control Eng.},
  year    = {2024}
}

@MISC{Cheongetal2013,
  author  = {Cheong, J. and Han, S. and Lee, J.},
  title   = {Adaptive fuzzy dynamic surface sliding mode position control for a robot manipulator with friction and deadzone},
  year    = {2013}
}

@MISC{Xiangetal2017,
  author  = {Xiang, W. and Li, N. and Sun, Y.},
  title   = {Fuzzy adaptive prescribed performance control for a class of uncertain nonlinear systems with unknown dead-zone inputs},
  year    = {2017}
}

@INPROCEEDINGS{Wangetal2020,
  author    = {Wang, L. and Zhao, D. and Liu, F. and Liu, Q. and Zhang, Z.},
  title     = {Active disturbance rejection position synchronous control of dual-hydraulic actuators with unknown dead-zones},
  booktitle = {Italian National Conf. on Sensors},
  year      = {2020}
}

@ARTICLE{Ahietal2018,
  author  = {Ahi, B. and Haeri, M.},
  title   = {Linear active disturbance rejection control from the practical aspects},
  journal = {IEEE/ASME Trans. Mechatronics},
  year    = {2018}
}

@ARTICLE{Ebrahimietal2024,
  author  = {Ebrahimi, M. M. and Homaeinezhad, M.},
  title   = {Numerical algorithm for nonlinearity compensation of hardly constrained actuation for trajectory tracking control of deadzone-included dynamic systems},
  journal = {ISA Trans.},
  year    = {2024}
}

@ARTICLE{willems2005,
  author  = {Willems, J. C. and Rapisarda, P. and Markovsky, I. and De Moor, B. L. M.},
  title   = {A note on persistency of excitation},
  journal = {Syst. Control Lett.},
  volume  = {54},
  number  = {4},
  pages   = {325--329},
  year    = {2005}
}

@INPROCEEDINGS{coulson2019,
  author    = {Coulson, J. and Lygeros, J. and D\"orfler, F.},
  title     = {Data-enabled predictive control: In the shallows of the DeePC},
  booktitle = {Proc. 18th Eur. Control Conf. (ECC)},
  address   = {Naples, Italy},
  pages     = {307--312},
  year      = {2019}
}

@INPROCEEDINGS{favoreel1999,
  author    = {Favoreel, W. and De Moor, B. and Van Overschee, P.},
  title     = {SPC: Subspace predictive control},
  booktitle = {Proc. 14th IFAC World Congr.},
  address   = {Beijing, China},
  year      = {1999}
}

@ARTICLE{depersis2020,
  author  = {De Persis, C. and Tesi, P.},
  title   = {Formulas for data-driven control: Stabilization, optimality, and robustness},
  journal = {IEEE Trans. Autom. Control},
  volume  = {65},
  number  = {3},
  pages   = {909--924},
  year    = {2020}
}

@INPROCEEDINGS{Lazaroffset,
  author    = {Lazar, M. and Verheijen, P. C. N.},
  title     = {Offset-free data-driven predictive control},
  booktitle = {2022 IEEE 61st conference on decision and control (CDC)},
  pages     = {1099--1104},
  year      = {2022}
}

\end{document}